\newcommand{\GB}{\mathcal{R}_{GB}}
\newcommand{\C}{\mathcal{C}}
\newcommand{\limrinf}{\lim_{r \to \infty}}
\newcommand\T{\rule{0pt}{2.5ex}}       % Top strut
\newcommand\B{\rule[-2.5ex]{0pt}{0pt}}
\newcommand\canuda{\texttt{Canuda}}
\newcommand{\rs}{r_{\text{s}}}
\newtheorem{theorem}{Theorem}
\newtheorem{corollary}{Corollary}[theorem]
\newtheoremstyle{named}{}{}{\itshape}{}{\bfseries}{.}{.5em}{\thmnote{#3 }#1}
\theoremstyle{named}
\newtheorem*{namedtheorem}{Theorem}
\definecolor{orange}{rgb}{1.0, 0.5, 0.0}
\let\cat@comma@active\@empty
\begin{document}
\title{How Do Spherical Black Holes Grow Monopole Hair?}
\author{Abhishek Hegade K. R.}
\affiliation{Illinois Center for Advanced Studies of the Universe, Department of Physics, University of Illinois at Urbana-Champaign, Urbana, IL 61801, USA}

\author{Elias R. Most}
\affiliation{Princeton Center for Theoretical Science, Princeton University, Princeton, NJ 08544, USA}
\affiliation{Princeton Gravity Initiative, Princeton University, Princeton, NJ 08544, USA}
\affiliation{School of Natural Sciences, Institute for Advanced Study, Princeton, NJ 08540, USA}

\author{Jorge Noronha}
\affiliation{Illinois Center for Advanced Studies of the Universe, Department of Physics,
University of Illinois at Urbana-Champaign, Urbana, IL 61801, USA}

\author{Helvi Witek}
\affiliation{Illinois Center for Advanced Studies of the Universe, Department of Physics, University of Illinois at Urbana-Champaign, Urbana, IL 61801, USA}

\author{Nicol\'as Yunes}
\affiliation{Illinois Center for Advanced Studies of the Universe, Department of Physics, University of Illinois at Urbana-Champaign, Urbana, IL 61801, USA}

\begin{abstract}

Black holes in certain modified gravity theories that contain a scalar field coupled to curvature invariants are known to possess (monopole) scalar hair while non-black-hole spacetimes (like neutron stars) do not. 
Therefore, as a neutron star collapses to a black hole, scalar hair must grow until it settles to the stationary black hole solution with (monopole) hair. 
In this paper, we study this process in detail and show that the growth of scalar hair is tied to the appearance and growth of the event horizon (before an apparent horizon forms), which forces scalar modes that would otherwise (in the future) become divergent to be radiated away. 
We prove this result rigorously in general first for a large class of modified theories, and then we exemplify the results by studying the temporal evolution of the scalar field in scalar Gauss-Bonnet gravity in two backgrounds: (i) a collapsing Oppenheimer-Snyder background, and (ii) a collapsing neutron star background. 
In case (i), we find an exact scalar field solution analytically, while in case (ii) we solve for the temporal evolution of the scalar field numerically, with both cases supporting the conclusion presented above. 
Our results suggest that the emission of a burst of scalar field radiation is a necessary condition for black hole formation in a large class of modified theories of gravity.
\end{abstract}
\maketitle
% \tableofcontents
%%%%%%%%%%%%%%%%%%%%%%%%%%%%%%%%%%%%%%%%%%%%%%%%%%%%%%%%%%%%%%%%%%%%%%%%%%%%%%%%%%%%
\section{Introduction}
General Relativity (GR) has so far passed all experimental tests performed to verify its validity in the weak field regime with Solar System observations and the strong-field regime with binary pulsar observations~\cite{2014-Will-Review}.
The detection of gravitational waves by the advanced Laser Interferometer Gravitational-Wave Observatory (LIGO) and Virgo~\cite{2019-GWTC-1,2021-GWTC-2}
has allowed for qualitatively new tests in the highly dynamical, extreme regime of gravity that unfolds in the late inspiral and merger of compact objects.
Testing GR in extreme gravity will naturally involve testing the nature of black holes (BHs) and neutron stars (NS).

BHs are one of the most important predictions of GR, but they are also one of the simplest macroscopic objects in nature. As shown in the celebrated no-hair theorems~\cite{1967-Israel,1968-Israel,1971-Carter,1972-Hawking-No-Hair}, stationary and axi-symmetric BHs in GR can be completely characterised by just three conserved quantities: their mass, charge, and spin.
Several modified gravity theories, however, predict the existence of BHs with
extra ``hair''~\cite{herdeiro2018asymptotically}.
Understanding the nature of BHs and the charges describing BH spacetimes in these modified theories will help us determine how to place constraints on 
them and how to search for deviations from the predictions of GR.

Consider then a general class of modified gravity theories in which a scalar or pseudo-scalar field couples non-minimally  to a curvature invariant.
Modified theories that fall in this class include scalar Gauss-Bonnet theory (sGB)~\cite{1996-Kanti,Mignemi:1992nt}, dynamical Chern-Simons (dCS) gravity~\cite{2003-Jackiw-Pi,2009-Alexander-Yunes} and modified quadratic theories of gravity in general~\cite{Yunes_2013,Cano:2021rey}.

Within this class of theories, several question then naturally arise: 
\begin{itemize}
\setlength\itemsep{0.05cm}
\item[(i)] What types of BHs are allowed?
\item[(ii)] Which of these BHs are the end point of gravitational collapse? 
\item[(iii)] How does the scalar field behave during gravitational collapse? 
\item[(iv)] Is the long-range behavior of the scalar field excited or suppressed during gravitational collapse?
\end{itemize}
These questions are important for various reasons, as their answers determine what BHs look like in a large class of modified theories of gravity, and whether these modifications are observable. 

Some of these questions have been addressed before in the context of specific modified gravity models.
The answer to question (i) has been searched for in various theories, both analytically in the small coupling approximation and numerically in stationary spacetimes.
More specifically, static and slowly rotating BH solutions in the small coupling approximation were found in~\cite{1996-Kanti,2011-Stein-Yunes,Sotiriou-Zhou-2014,Sotiriou:2014PRD,Ayzenberg:2014aka,2011-Pani-Macedo-Cardoso,2015-Maselli} for sGB theory and in~\cite{Yunes:2009hc,Yagi:2012ya,2011-Pani-Macedo-Cardoso,2011-Stein-Yunes,2011-Pani-Macedo-Cardoso} for dCS gravity, while numerical solutions for static and stationary BHs were calculated in~\cite{Sotiriou-Zhou-2014,Sotiriou:2014PRD,2021-Sullivan} for sGB theory and in~\cite{2018-Delsate} for dCS gravity. All of this work has shown that the scalar field at the event horizon (EH) of a stationary and slowly-rotating BH can either be regular or divergent \cite{1996-Kanti,Yunes:2009hc,2011-Stein-Yunes,Sotiriou-Zhou-2014,Kleihaus:2011tg}, with the divergent behavior usually discarded through the imposition of boundary conditions.

Preliminary investigation to answer questions (ii) and (iii) were initiated
in~\cite{Benkel-2016,Benkel_2017}.
These studies simulated the evolution of the sGB scalar in the background of the Oppenheimer-Snyder (OS) collapse. 
Working in the decoupling approximation, i.e., neglecting backreaction onto the spacetime metric, it was found that the scalar field settles to the regular solution.
For studies which analyze the evolution of the scalar field in sGB and Einstein-dilation Gauss-Bonnet (EdGB) \textit{without} working in the decoupling limit, we refer the reader to Refs.~\cite{Ripley-Pretorius-2019,Ripley-Pretorius-2020,Ripley-Pretorius-2020(1),Ripley-East-2021,Kovacs:2020pns,Julie-2020,Julie-2019,julie2022black}.
Some answers to question (iv) were obtained in~\cite{2013-Yagi-Stein-Yunes-Tanaka,2016-Yagi-Stein-Yunes,Prabhu-Stein-2018,Wagle:2018tyk} for sGB theory and dCS gravity, where a classification of the monopole scalar hair was obtained for isolated and stationary NS and BH spacetimes.

Although impressive, all of this previous work was not able to reach general conclusions in a dynamical setting, which is the focus of this paper.
We begin by focusing on question (i), and thus, 

%--------------------------------------------------------------------------
\vspace{-0.2cm}
\begin{center}
\fbox{\parbox{8cm}{
{\bf We develop a complete classification of all static and spherically symmetric BH spacetimes in a wide class of theories, including but not limited to sGB theory and dCS gravity.  }
}}
\end{center}
\vspace{-0.2cm}
%--------------------------------------------------------------------------

\noindent This classification is based on the behaviour of the scalar field at the EH, which can either be finite (Type 1) or divergent (Type 2). 
This two parameter behaviour of the scalar field in BH spacetimes has been observed in a large number of modified theories~\cite{herdeiro2018asymptotically}, including sGB gravity~\cite{2011-Stein-Yunes,Sotiriou-Zhou-2014} and dCS gravity~\cite{Yunes:2009hc}, and it exhausts all possible BH solutions in wide class of theories under consideration.

With this classification at hand, we then establish a few results. First, 

%--------------------------------------------------------------------------
\vspace{-0.2cm}
\begin{center}
\fbox{\parbox{8cm}{
{\bf We \textit{prove} that if the scalar field is regular at the EH, then the BH is non-extremal.
}
}}
\end{center}
\vspace{-0.2cm}
%--------------------------------------------------------------------------

\noindent A non-extremal BH is defined by its surface gravity being non-zero and finite. Our proof does not require an explicit solution for the field equations; rather, it is based on a novel argument that relies on the fact that if the cross sections of the EH of the BH are compact, then the Gaussian curvature of the cross section cannot be zero~\cite{Chrusciel-Reall-Tod-2005}. This result is important because
extremal BHs have been shown to be unstable to perturbations by scalar~\cite{aretakis2013horizon,2013NonLinearArtekis}, electromagnetic and gravitational perturbations~\cite{2012LuciettiReall,2013Murata}.
We emphasize that this result is {\textit{generic}}, i.e., independent of the specific theory of gravity.

\noindent Second, 

%--------------------------------------------------------------------------
\vspace{-0.2cm}
\begin{center}
\fbox{\parbox{8cm}{
{\bf We show that primary hair is sourced only by a divergent scalar field at the EH. 
}
}}
\end{center}
\vspace{-0.2cm}
%--------------------------------------------------------------------------

\noindent In the exterior spacetime, far away from a compact object, the scalar field can be expanded in powers of $1/r$, where $r$ is a suitable distance measure from the compact object. The coefficient of the leading $1/r$ term in the far-field expansion of the scalar field is called the (monopole) ``scalar hair.'' 
The scalar hair is called ``primary'' if knowledge of the intrinsic parameters of the spacetime, such as its mass and spin, is \textit{not} sufficient to determine the value of the scalar hair, and it is called ``secondary''  otherwise. The existence of hair is important because it controls the magnitude of dipole scalar radiation in compact binaries~\cite{2012-Yagi-Stein-Yunes-Tanaka} 
and, consequently, can lead to observable effects in compact binaries~\cite{Shiralilou:2020gah,Shiralilou:2021mfl}
or binary pulsars.
The fact that the scalar hair is secondary for solutions of Type 1 implies that, if these are the end state of gravitational collapse, then observable effects due to dipole radiation will be controlled by the mass and spin of the compact object, which can then be in principle constrained.  

After these analyses, we address question (ii) by studying whether Type 1 or Type 2 solutions are the end states of gravitational collapse. 

%--------------------------------------------------------------------------
\vspace{-0.2cm}
\begin{center}
\fbox{\parbox{8cm}{
{\bf We \textit{prove} that the scalar field settles to a Type 1 solution during spherically-symmetric, gravitational collapse.
}
}}
\end{center}
\vspace{-0.2cm}
%--------------------------------------------------------------------------

\noindent We arrive at this result by perturbatively expanding the field equations in the coupling constant of the theory, neglecting the back reaction of the scalar field onto the metric, and extending the Kay-Wald theorem~\cite{Wald-1979-Theorem,Kay_1987,dafermos2008lectures} to include a curvature source. A similar but fully numerical analysis was carried out in~\cite{Benkel-2016,Benkel_2017} for the particular case of sGB theory, arriving at the same result. Our results are important because they extend these numerical results to a wide class of theories, and they do so with a mathematical and analytical proof. 

Our extension of the Kay-Wald theorem allows us to address question (iii) and (iv) to determine the behavior of the scalar field during gravitational collapse and whether any long range components are excited. 

%--------------------------------------------------------------------------
\vspace{-0.2cm}
\begin{center}
\fbox{\parbox{8cm}{
{\bf We show that the scalar field remains regular during spherically-symmetric gravitational collapse, relaxing to a Type 1 solution by shedding any otherwise divergent behavior through the emission of scalar waves. 
}
}}
\end{center}
\vspace{-0.2cm}
%--------------------------------------------------------------------------

\noindent As before, we work to first order in an expansion about small coupling, and thus, neglect back-reaction of the scalar field onto the spacetime. We then employ our extension of the Kay-Wald theorem, in conjunction with Price's law~\cite{Price-1972,Gundlach:1993tp}, to determine \textit{how} the scalar field and scalar hair relax to a regular configuration (ie.~through the emission of scalar waves). Moreover, we develop a corollary to our extended Kay-Wald theorem to provide
covariant formulae for the scalar hair in the initial and final states of gravitational collapse, proving that scalar hair before and after collapse is secondary.

To close, we consider all of these questions again, but for the specific case of sGB gravity.
In this theory, the scalar hair of a stationary and axisymmetric NS spacetime is zero~\cite{2013-Yagi-Stein-Yunes-Tanaka,2016-Yagi-Stein-Yunes,Wagle:2018tyk}, but it is non-zero for a stationary and axisymmetric BH spacetime. In the latter, the scalar hair is related to the surface gravity and the Euler number of the bifurcation two sphere~\cite{2016-Yagi-Stein-Yunes,Prabhu-Stein-2018}.
Therefore, during gravitational collapse and BH formation, scalar hair must grow and transition dynamically from its NS value to its BH value. 

We study these scalar field dynamics to first order in perturbation theory and in two background spacetime models: an analytical OS collapse background ~\cite{OS-1939} and a full numerical relativity simulation of the collapse of a NS modeled with a perfect fluid stress-energy tensor. Through this study,

%--------------------------------------------------------------------------
\vspace{-0.2cm}
\begin{center}
\fbox{\parbox{8cm}{
{\bf We show that gravitational collapse leads to Type 1 solutions in sGB theory, with secondary hair generated through the radiation of scalar modes that would be otherwise divergent when the EH crosses the stellar surface.  
}
}}
\end{center}
\vspace{-0.2cm}
%--------------------------------------------------------------------------

\noindent We establish this by first finding an \textit{exact} analytical solution for the dynamical evolution of the scalar field in the interior of the star during OS collapse. This solution explicitly shows that the scalar field is bounded as the surface of the star crosses the EH, leading to a Type 1 solution. Moreover, the solution shows that scalar hair in a NS spacetime is zero because of the presence of certain scalar wave modes that would diverge if the stellar surface were inside the event horizon.

These wave modes, however, begin to be radiated away when the EH first forms, and are completely radiated away before the EH crosses the stellar surface, sourcing non-zero scalar hair for the BH spacetime. 
All of these results are then confirmed numerically by simulating the evolution of the sGB scalar field in the dynamical background of a collapsing NS, again to first order in perturbation theory, and we find excellent agreement with our analytical model. These results are important because they suggest that what is responsible for scalar hair in sGB theory is not the change in topology of the background (from a non-punctured NS spacetime to a punctured BH spacetime), but rather the emergence of an EH in the spacetime (in anticipation to the formation of a curvature singularity). 

The remainder of this paper presents all of these results in much more detail, and it is structured as follows.
Section~\ref{sec:BHs-Action} describes the action and the equations of motion for the theories we are studying and presents the classification of BH solutions and their connection to surface gravity.
Section~\ref{sec:Kay-Wald} describes the perturbative approach and proves that the scalar field settles to a regular solution in a dynamical collapse.
Section~\ref{sec:sGB} studies the dynamics of the scalar field in sGB gravity.
Our conclusions and directions for future work are presented in Section~\ref{sec:Conclusions}.
The signature of the metric is $(-,+,+,+)$ and we use geometric units, $G=1=c$ throughout the paper.

%%%%%%%%%%%%%%%%%%%%%%%%%%%%%%%%%%%%%%%%%%%%%%%%%%%%%%%%%%%%%%%%%%%%%%%%%%%%%%%%%%%%%%%%%%%%%%%%%%%%%%%%%%%%%%%%
\section{Black Hole Solutions in Modified Theories}\label{sec:BHs-Action}
\subsection{Action and Field Equations}
We consider a general class of theories with a scalar or pseudo-scalar field $\Phi$ which couples non-minimally to gravity. The action for the theories we study takes the form,
\begin{align}\label{eq:Action}
     S &= \int d^4x \sqrt{-g} \left( \frac{1}{16\pi} R + \epsilon\, \Phi \,\mathcal{F}(g,\partial g,\partial^2 g,\ldots) \right.\nonumber \\
     &- \left. \frac{1}{2} \left( \nabla_{\mu}\Phi \nabla^{\mu}\Phi  \right)\right) + S_{\text{matter}},
\end{align}
where, $g$ denotes the determinant of the spacetime metric, $R$ denotes the Ricci scalar, $\nabla$ denotes the covariant derivative,   $\epsilon$ is a coupling constant, $\mathcal{F}$ is any arbitrary curvature scalar and $S_{\text{matter}}$ denotes the action for additional minimally coupled matter fields.
We will require that the curvature scalar $\mathcal{F}$ goes to zero faster than $r^{-4}$ in asymptotically flat spacetimes where $r$ is a suitably defined radial coordinate. This assumption guarantees the convergence of certain integrals (see Sec.~\ref{sec:Scalar-Hair-Formula}).

We can map the above action to some specific modified theories. 
For example, it can be mapped to the action of sGB theory by choosing the coupling constant $\epsilon = \alpha_{\text{sGB}}$,
and the curvature scalar $\mathcal{F}$ to be equal to the Gauss-Bonnet invariant, $\mathcal{F} = \mathcal{R}_{\text{GB}} = R^2 - 4R_{\mu\nu}R^{\mu\nu} + R_{\alpha\beta\gamma\delta}R^{\alpha\beta\gamma\delta}$. 
The scalar field $\Phi = \phi_{\text{sGB}}$ now represents the dilaton field.
We can also map the above action to dynamical 
dCS gravity by choosing, $\epsilon = \alpha_{\text{dCS}}/4$ and the curvature scalar $\mathcal{F}$ is equal to the Pontraygin density,
$\mathcal{F} = R_{\beta\alpha\gamma\delta}{}^{*}R^{\alpha\beta\gamma\delta}$. 
The scalar field $\Phi = \theta_{\text{dCS}}$ now represents the axion pseudoscalar.

Our analysis also applies to theories whose action reduces to Eq.~(\ref{eq:Action}) when the coupling constant is small.
An example of such a theory is EdGB theory, whose action is identical to the one above, but with the second term replaced via
\begin{equation}
\epsilon\, \Phi \,\mathcal{F} \to e^{\gamma\Phi} \mathcal{R}_{\text{GB}}\,;
\end{equation}
when the coupling constant is small, $\gamma\Phi \ll 1$, we can expand the above equation linearly in terms of $\gamma$ to find
\begin{align}
\epsilon\, \Phi \,\mathcal{F} \to
\gamma\, \Phi\, \mathcal{R}_{\text{GB}} +  \mathcal{R}_{\text{GB}}\,,
\end{align}
where the last term is just the Gauss-Bonnet invariant, which is topological and therefore does not contribute to the equations of motion. 
Identifying $\gamma$ with $\epsilon$, we see that EdGB theory reduces to the sGB action for small coupling, which is of the form given in Eq.~(\ref{eq:Action}).

The equations of motion for the action in Eq.~(\ref{eq:Action}) take the form
\begin{align}\label{eq:EOM}
    G_{\mu\nu} + 16\pi \,\epsilon\, \mathcal{C}_{{\mu\nu}} &= 8\pi \left(T_{\mu\nu}^{\Phi} +  T_{\mu\nu}^{\text{matter}} \right) \nonumber, \\
    \Box{\Phi} + \epsilon \,\mathcal{F} = 0. 
\end{align}
The stress energy tensor for $\Phi$ is given by
\begin{equation}
    T_{\mu\nu}^{\Phi} =  \nabla_{\mu}\Phi\nabla_{\nu}\Phi - \frac{g_{\mu\nu}}{2}\left( \nabla^{\delta}\Phi\nabla_{\delta}\Phi\right),
\end{equation}
while the tensor $\mathcal{C}_{\mu\nu}$ is given by
\begin{equation}
    \mathcal{C}_{\mu\nu} := \frac{1}{\sqrt{-g}}\frac{\delta}{\delta g^{\mu\nu}}\left( \int d^4x \sqrt{-g}\, \Phi \mathcal{F}(g,\partial g,\partial^2 g,\ldots) \right) .
\end{equation}
The explicit expressions for
the $\C_{\mu\nu}$
tensor for modified quadratic gravity theories such as sGB and dCS can be found in Ref.~\cite{Yunes_2013}.
The matter stress energy tensor is defined by
\begin{equation}
    T_{\mu\nu}^{\text{matter}} = \frac{-2}{\sqrt{-g}}\frac{\delta}{\delta g^{\mu\nu}} S_{\text{matter}}\, .
\end{equation}
%%%%%%%%%%%%%%%%%%%%%%%%%%%%%%%%%%%%%%%%%%%%%%%%%%%%%%%%%%%%%%%%%%%%%%%%%%%%%%%%%%%
\subsection{Black Hole Solutions}\label{sec:Two-Para-Sol}
Let us now classify the local behaviour of the scalar field near the EH in static BH solutions for the theories we described in the previous section and show how this behaviour is related to the surface gravity of the EH.
The theories we are considering reduce to GR when we set the coupling constant to zero (see Sec. \ref{sec:Kay-Wald}). 
More specifically, we assume that the solution spectrum of the theory is a deformation of the GR solution spectrum that scales with $\epsilon^p$ with $p>0$, such that the solutions reduce \textit{smoothly} to GR as $\epsilon \to 0$.
Because of the existence of this continuous limit, we \textit{assume} that BH solutions exist and the causal structure of static and asymptotically flat BH solutions in these theories is similar to that in GR.
That is, for static BH solutions we assume that,
\begin{enumerate}
\setlength\itemsep{0.05cm}
    \item Static BH solutions contain an EH that is generated by a timelike killing vector.
    \item The EH is a stationary null surface.
    \item Cross sections of the EH are compact.
    \item Curvature scalars are regular on and exterior to the EH (ie.~the singularity is ``hidden'' behind the EH).
\end{enumerate}
\begin{figure}
    \includegraphics[width=0.75\columnwidth]{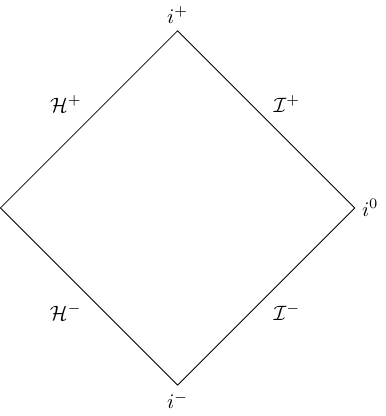}
    \caption{Penrose diagram for static and spherically symmetric BHs for theories whose action is described by Eq.~(\ref{eq:Action}). We assume that the conformal structure of BHs in these theories is similar to the conformal structure of static BHs in general relativity. Note that we only show the part of the conformal diagram which is causally connected to the external universe. In the figure, $\mathcal{H}^{\pm}$ denotes the future and the past event horizons, $\mathcal{I}^{\pm}$ denotes the future and the past null infinity, $i^{\pm}$ denote the future and past time like infinities and $i^0$ denotes the point at spatial infinity.}
    \label{fig:Conformal-Structure-BHs-Alternate}
\end{figure}
We also assume that the scalar field far tends to zero at spatial infinity.
The conformal structure of this spacetime is illustrated in Fig.~\ref{fig:Conformal-Structure-BHs-Alternate}.
A few comments about the above assumptions are in order. 
Assumptions (1) and (2) above can be shown to hold without using field equations~\cite{Carter-1969}.
Assumptions (3) and (4) define BHs as compact objects without naked singularities.

With these assumptions, we can classify the static BH solutions in these theories based on the behaviour of the scalar field on the EH as follows:
\begin{itemize}
\setlength\itemsep{0.05cm}
    \item \textbf{Type 1}. Solutions where the scalar field is regular on the EH.
    \item \textbf{Type 2}. Solutions where the scalar field diverges on the EH.
\end{itemize}
These two types of behaviour of the scalar field occur in a wide class of modified gravity theories.
We refer the reader e.g.~to Table~1 of~\cite{herdeiro2018asymptotically} for a summary of Type 1 and Type 2 solutions in various theories.
This two-parameter behaviour of scalar field solutions was  also found in~\cite{Yunes:2009hc} for slowly rotating BHs in dCS gravity. 
Similar results were also obtained for sGB gravity in~\cite{2011-Stein-Yunes,Sotiriou-Zhou-2014,Sotiriou:2014PRD}. 

Assuming BHs in the class of theories described above exist and respect assumptions (1)--(4), then the metric of a static and spherically symmetric spacetime takes the general form,
\begin{equation}\label{eq:Metric-Spherical-BH}
    ds^2 = -A(\rs) dt^2 + B(\rs) d\rs^2 + \rs^2 d\Omega^2\,.
\end{equation}
We assume that the scalar field respects the symmetries of the metric, $\partial_t \Phi=0$, and the scalar field equation [Eq.~\eqref{eq:EOM}] in these coordinates takes the form,
\begin{equation}\label{eq:Scalar-Field-r}
    \frac{1}{\rs^2 \sqrt{A(\rs)B(\rs)}} \partial_{\rs} \left[\rs^2 \sqrt{\frac{A}{B}} \partial_{\rs} \Phi(\rs) \right] = -\epsilon \; \mathcal{F}.
\end{equation}

The coordinates, $\left(t,\rs,\theta,\phi \right)$ introduced above are not well behaved at the EH. To study the local behaviour of the metric at the EH, it is useful to introduce \textit{Gaussian null coordinates} $(v,r,\theta,\phi)$~\cite{2013-Kunduri,1999-Friedrich}, in which the metric takes the form,
\begin{equation}\label{eq:Gauss-Null}
    ds^2 = -D(r)dv^2 + 2\,dv\,dr + K(r)d\Omega^2\,.
\end{equation}
This metric is regular at the EH and the latter, in the above coordinates, is located at $r=0$. The function $K(r)^{-1}$ denotes the Gaussian curvature of the 2-sphere parametrised by the angular coordinates $\{\theta,\phi \}.$ 
On the EH, $\partial_v$ and $dr$ are null and this condition is equivalent to,
\begin{equation}\label{eq:Local-Met-v}
    D(0) = 0 \implies D(r) = r^{1+p}\,D_0(r), \,p\geq 0 \, ,
\end{equation}
where $D_0(r)$ is a smooth function of $r$ and $D_0(0) \neq 0$.

The surface gravity of the EH can be calculated using the above metric,
\begin{align}\label{eq:Surface-Gravity}
    \kappa &= \Gamma^{v}_{vv} 
%%    \nonumber \,  \\ &
    = \lim_{r \to 0}\frac{D'(r)}{2} \,  \nonumber \\
    &= \lim_{r \to 0}\frac{(1+p)r^p D_0(0)}{2},
\end{align}
and, therefore, the EH is degenerate (i.e.,~the surface gravity vanishes) if $p>0$.
Degenerate BH spacetimes are known to be unstable to perturbations, as shown by Aretakis in~\cite{aretakis2013horizon,2013NonLinearArtekis} for perturbations by scalar waves in 4-dimensional BH spacetimes.
This result was then generalised to include electromagnetic and gravitational wave perturbations, as well as higher-dimensional BH spacetimes in~\cite{2012LuciettiReall,2013Murata}.
These results carry through to extremal BHs in the theories that we consider in this paper. 
Moreover, as we show below, regularity of the scalar field at the EH requires that the BH solution should be non-extremal.
But for the sake of completeness, for now, we include these solutions in the discussion below as well. 

The equations of motion \eqref{eq:Action} for the scalar field $\Phi$ in Gaussian null coordinates \eqref{eq:Gauss-Null} take the form,
\begin{equation}
     \frac{1}{K(r)} \partial_r \left[K(r) D(r) \partial_r \Phi(r) \right] = -\epsilon \; \mathcal{F}.
\end{equation}
This equation can be integrated to yield,
\begin{dmath}\label{eq:Scalar-Field-Integrated-v}
     K(r)D(r) \partial_r \Phi(r) - \left.\left[K(r) D(r)\partial_r \Phi(r)\right]\right|_{r = 0} = -\epsilon \int_{0}^{r}\mathcal{F}(x)K(x)\, dx.
\end{dmath}
We now classify the behaviour of the scalar field at the EH and outline how regularity of the scalar field guarantees the non-extremality of the EH.

\subsubsection{Case 1: Scalar field and its derivatives are regular at the event horizon}
\label{sec:Case-1-Scalar-Field}

From Eq.~(\ref{eq:Local-Met-v}), we see that $D(r)$ goes to zero near the EH because $p\geq 0$. Furthermore, recall that $K$ cannot vanish at the EH, since otherwise the Gaussian curvature of the 2-sphere would diverge at the EH.
Therefore, if the scalar field is regular at the EH, then the second term on the left hand side of Eq.~\eqref{eq:Scalar-Field-Integrated-v} must vanish, i.e.,
\begin{equation}
    \epsilon \; Q := \left.\left[K(r) D(r)\partial_r \Phi(r)\right]\right|_{r = 0} = 0.
\end{equation}
We can rewrite Eq.~(\ref{eq:Scalar-Field-Integrated-v}) to read,
\begin{align}\label{eq:Scalar-Field-Dr}
  \partial_r \Phi & =  -\frac{\epsilon} {D_0(r) r^{1+p}K(r)} \int_{0}^{r}\mathcal{F}(x)K(x) dx.
\end{align}
Expanding around the horizon at $r=0$, we can approximate \eqref{eq:Scalar-Field-Dr} as
\begin{equation}
    \partial_r \Phi = \frac{-\epsilon r^{-p} \mathcal{F}(0)}{D_0(0)} + O(r^{1-p}). 
\end{equation}
Therefore, if $\mathcal{F}(0) \neq 0$, then the regularity of the scalar field requires that $p=0$. In turn, this means that the surface gravity is non-zero and the BH must be non-extremal.
On the other hand, suppose $\mathcal{F}(0)$ is equal to zero. In this case we need to look at the gravitational equations \eqref{eq:Action}. 
Studying these equations shows us that the assumption that the scalar field is regular and the fact that $\mathcal{F}(0) = 0 $ leads to a contradiction.
We present a proof of this statement for the case of modified quadratic gravity theories~\cite{Yunes_2013} and $\mathcal{F}(R)$ theories in Appendix~\ref{Appendix:Proof}. 
The proof proceeds by studying the local behaviour of the gravitational field equations and then showing that $\mathcal{F}(0) = 0$ leads to a contradiction with assumptions (3,4) made above.
Therefore, the regularity of the scalar field implies that the BH is non-extremal.

\subsubsection{Case 2: Scalar field or its derivatives diverge at the event horizon}

Let us now go back to Eq.~\eqref{eq:Scalar-Field-Integrated-v}. If the scalar field diverges, then we cannot say whether $Q$ vanishes. Therefore,
\begin{equation}\label{eq:Scalar-Field-Integrated}
    \partial_r \Phi  =   -\frac{\epsilon} {D(r)K(r)} \int_{0}^{r}\mathcal{F}(x)K(x)\, dx + \frac{\epsilon \, Q}{ K(r)D(r)}
\end{equation}
where $Q$ can either be zero or non-zero depending on the rate of divergence of the scalar field at the EH. We note that in this case there is no restriction on the surface gravity of the EH, so it can either be zero or non-zero.

Type 2 solutions with regular BH geometry at the EH are known to exist in a wide class of theories~\cite{herdeiro2018asymptotically}. For example, Bekenstein found an example in the context of Einstein-Maxwell-conformal scalar theory~\cite{1975-Bekenstein}. Although the BH geometry is regular, components of the Ricci tensor and the stress energy tensor projected along null directions cannot remain bounded for this solution. In particular, as argued in Ref.~\cite{1998-Sudarsky-Zannias}, the stress energy tensor for Bekenstein's solution cannot be well-defined at the EH because its projections along null-vectors parallel to the EH diverges.

Does this mean that all Type 2 solutions are unphysical? For the class of theories we are considering, na\"ively projecting the gravitational field equations [Eq.~\eqref{eq:EOM}] along null vectors parallel to the EH horizon might suggest that the components of the Einstein tensor would diverge. But to analyse this carefully, one needs to study the projection of $\mathcal{C}_{\mu\nu}$ along null vectors parallel to the EH. Analysing the properties of this tensor without a specific choice for the curvature coupling $\mathcal{F}$ is not possible. Therefore, we cannot rule out the existence of Type 2 solutions at this stage of the analysis. As we will show in Sec.~\ref{sec:Kay-Wald} however, during a dynamical collapse, the spacetime always settles to solutions of Type 1, at least to first order in perturbation theory.

From the arguments above, we see that the generic local behaviour of $\Phi$ near the horizon can either be regular or divergent.
The reason for this behaviour can be understood by noting that static wave solutions generically diverge near an EH~\cite{Chase-1970}. Given the assumptions we have worked with, BH solutions must fall into one of the two possible behaviours we discussed above.
When the scalar field is regular at the EH, we see that the surface gravity has to be non-zero. 
Moreover, it was shown numerically in Ref.~\cite{Kleihaus:2011tg} that the scalar field diverges for extremal rotating BH solutions in EdGB gravity. This example also hints at the claim that if the BH is extremal, then the scalar field diverges at the EH.
%%%%%%%%%%%%%%%%%%%%%%%%%%%%%%%%%%%%%%%%%%%%%%%%%%%%%%%%%%%%%%%%%%%%%%%%%%%%%%%%%%%%%%%%%%%%%%%%%%%%%%%%%%%%%%%%%%%%%%%%%%%%%%%%%%%%%%%%%%%%%%
\section{Behaviour of Scalar Hair in Gravitational Collapse}\label{sec:Kay-Wald}
In the dynamical collapse of a NS to a BH, the scalar field will have to settle down to either a Type 1 or Type 2 solution. In the following, we will try to answer this question by considering a perturbative approach to the gravitational field equations \eqref{eq:Action}. More specifically, we adopt the following perturbative approach,
\begin{align}
    g_{\mu\nu} &= g_{\mu\nu}^{(0)} + \epsilon \; g_{\mu\nu}^{(1)} + {\cal{O}}(\epsilon^2)\,,\\
    \Phi &= \Phi^{(0)} + \epsilon \; \Phi^{(1)} + {\cal{O}}(\epsilon^2)\,.
    \label{eq:decomp}
\end{align}
To zeroth order in $\epsilon$, the equations of motion reduce to those of GR plus a minimally coupled massless scalar field $\Phi^{(0)}$.
Stationary BHs in this theory cannot support a non-trivial scalar field if the latter is regular at the EH and the EH geometry is regular~\cite{herdeiro2018asymptotically,Chase-1970,Janis-Newman-Winicour}.
Moreover in dynamical collapse, the scalar field radiates away under suitably physical conditions on the initial data~\cite{1998-Rein-Rendall,Christodoulou:1986zr,1987-Christodoulou,Christodoulou1991TheFO,1994-Christodoulou,1999-Christodoulou,Wald1999}. We therefore set $\Phi^{(0)}$ to zero.

Equations~(\ref{eq:EOM}) and~\eqref{eq:decomp} then imply that the scalar field $\Phi$ scales as $\mathcal{O} (\epsilon)$, and thus $\C_{\mu\nu}$ is also of this order. Because the modification to the metric field equations is proportional to $\epsilon\, \C_{\mu \nu}$ and $T_{\mu \nu}^\Phi \propto (\partial \Phi)^2$, then $\delta g_{\mu \nu}^{(1)}$ must vanish and the first corrections to the metric enter at ${\cal{O}}(\epsilon^2)$. Therefore, to first order in $\epsilon$, the equations of motion become,
\begin{align}\label{eq:EOM-Decoupled}
    G^{(0)}_{ab} - 8\pi \,T_{ab}^{\text{matter}} &=0 \nonumber\,, \\
    \Box{\Phi^{(1)}} + \epsilon \,\mathcal{F}^{(0)} &= 0\,.
\end{align}
From the equations above, we see that we have a scalar field $\Phi^{(1)}$ propagating on a background metric $g^{(0)}_{\mu\nu}$ that satisfies the Einstein equations, to first order in perturbation theory.
To simplify the notation, we will drop the superscripts on the metric and scalar field denoting the order of perturbation henceforth.

Numerical studies carried out in~\cite{Benkel_2017,Benkel-2016} sought to answer which BH configuration the spacetime settles to after dynamical collapse, to first order perturbation theory. 
These studies tracked the profile of the scalar field in sGB theory during OS collapse and in a Schwarzschild spacetime
(see~\cite{Witek:2018dmd} for an analogous computation in Kerr spacetimes.).
Their results suggested that the scalar field eventually settles to the regular Type 1 solution presented above.
A proof of this observed behaviour, however, had not appeared until now. 
In the following sections, we prove that the scalar field settles to a regular Type 1 solution during spherically symmetric gravitational collapse.
We then present covariant formulae for the scalar hair in the initial and end state of gravitational collapse.

\subsection{The BH end state} 

We begin by proving that BHs settle to Type 1 spacetimes after dynamical collapse by developing a generalisation of the Kay-Wald theorem \cite{Kay_1987,Wald-1979-Theorem} and a corollary that uses Price's law~\cite{Price-1972,Gundlach:1993tp}. 
The Kay-Wald theorem~\cite{Kay_1987,Wald-1979-Theorem,dafermos2008lectures} shows that spherically symmetric collapse in GR is stable to linear, scalar field perturbations. 
The Penrose diagram of the collapsing spacetime is shown in Fig.~\ref{fig:Conformal-Spherically-Symmetric-Collapse}, where one observes that curvature scalars only diverge at the central singularity. The precise statement of the theorem is as follows.

\begin{namedtheorem}[Kay-Wald Boundedness]\cite{Kay_1987,Wald-1979-Theorem,dafermos2008lectures}
Let (M,g) be a spherically symmetric spacetime representing the collapse to a spherically symmetric BH with a Penrose diagram as shown in Fig.~\ref{fig:Conformal-Spherically-Symmetric-Collapse}. Let $\Phi$ be a scalar field which is a solution to the Klein-Gordon equation,
\begin{equation}
    \Box \Phi - m^2 \,\Phi = 0, \, m^2\geq0,
\end{equation}
with initial data described on a Cauchy surface prior to the collapse. 
Assume\footnote{For a mathematically precise statement about the decay of the initial data, see Sec. 3 of~\cite{dafermos2008lectures}} that the initial data decays suitably at $i^0$. Then, there exists a constant $C$ such that for all points $p$ exterior to and on the horizon,
\begin{equation}
    |\Phi(p)|<C.
\end{equation}
\end{namedtheorem}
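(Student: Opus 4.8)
The plan is to prove the bound by the vector-field (energy) method, exploiting the fact that, by Birkhoff's theorem, the vacuum exterior of any spherically symmetric collapse coincides with the Schwarzschild exterior; the problem thus reduces to bounding a solution of $\Box\Phi=m^2\Phi$ on the Schwarzschild exterior together with its future horizon $\mathcal{H}^+$. I would work with the divergence-free stress--energy tensor $T_{\mu\nu}[\Phi]=\nabla_\mu\Phi\nabla_\nu\Phi-\tfrac{1}{2}g_{\mu\nu}(\nabla^\alpha\Phi\nabla_\alpha\Phi+m^2\Phi^2)$ and, for a multiplier vector field $X$, the associated current $J^{X}_\mu=T_{\mu\nu}X^\nu$, whose divergence is $\nabla^\mu J^{X}_\mu=\tfrac{1}{2}T^{\mu\nu}\pi^{(X)}_{\mu\nu}$, with $\pi^{(X)}_{\mu\nu}=\nabla_\mu X_\nu+\nabla_\nu X_\mu$ the deformation tensor. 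Integrating this identity over a region bounded by the initial Cauchy slice, a later slice, the timelike stellar surface, and a portion of $\mathcal{H}^+$, and invoking the divergence theorem, converts the bulk relation into a comparison of energy fluxes.

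First I would establish the estimate in the exterior. Taking $X=\partial_t$, the timelike Killing field, the deformation tensor vanishes and the current is conserved, yielding a time-independent energy that is coercive wherever $\partial_t$ is timelike, i.e.~strictly outside the horizon; the mass term $m^2\Phi^2\geq 0$ only helps coercivity, and the decay of the initial data at $i^0$ guarantees the energy is finite on the initial slice. To upgrade this energy bound to a pointwise bound, I would commute the wave operator with the symmetries that preserve it, namely $\partial_t$ and the rotation generators $\Omega_i$, obtaining uniform energy bounds for $\Phi$ and enough of its derivatives, and then apply Sobolev embedding on each spacelike slice---controlling two derivatives in $L^2$ suffices in three spatial dimensions---to conclude $\sup|\Phi|<C$ on any region bounded away from $\mathcal{H}^+$.

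The main obstacle is the behaviour \emph{on and near} the horizon, where $\partial_t$ degenerates to a null vector and the Killing energy loses coercivity, its flux to $\mathcal{H}^+$ being weighted by $D(r)\to 0$. To overcome this I would introduce the red-shift vector field $N$ of~\cite{dafermos2008lectures}: a timelike multiplier constructed in a neighbourhood of $\mathcal{H}^+$ whose deformation tensor contracts with $T_{\mu\nu}$ to produce a favourably signed bulk term. The existence of such an $N$ relies crucially on non-degeneracy of the horizon, i.e.~on the surface gravity being strictly positive, $\kappa>0$---which holds here because the exterior is Schwarzschild (equivalently, because the end-state is a Type~1, non-extremal BH, as established in Sec.~\ref{sec:Two-Para-Sol}). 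The red-shift estimate furnishes a non-degenerate energy controlling the derivatives transverse to $\mathcal{H}^+$; commuting additionally with $N$ and repeating the Sobolev argument then extends the pointwise bound continuously up to and including the horizon.

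An alternative to the red-shift multiplier is the original argument of~\cite{Kay_1987,Wald-1979-Theorem}, which exploits the bifurcate Killing horizon directly: one poses characteristic data near the bifurcation sphere and uses a domain-of-dependence/reflection argument to represent the near-horizon solution in terms of exterior data, transferring the exterior bound to $\mathcal{H}^+$. Either route reduces the statement to (i) a coercive conserved energy in the exterior and (ii) a mechanism for taming the horizon degeneracy, and I expect step (ii)---securing the non-degenerate near-horizon energy estimate---to be the crux, with the remaining steps being standard applications of the divergence theorem, commutation, and Sobolev embedding.
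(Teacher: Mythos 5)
You should first note what the paper actually does with this statement: it does not prove it. The Kay--Wald theorem is quoted as an imported result from~\cite{Kay_1987,Wald-1979-Theorem,dafermos2008lectures}, with a footnote deferring the precise hypotheses on the data to Sec.~3 of~\cite{dafermos2008lectures}; the paper's own proof work begins only with its Theorem~1, which takes this boundedness statement as a black box and extends it to the sourced equation $\Box\Phi+\epsilon\,\mathcal{F}=0$ by subtracting a bounded particular solution and applying the triangle inequality. Your proposal is therefore a reconstruction of the literature proof rather than an alternative to anything in the paper, and as a reconstruction it is essentially faithful: the conserved Killing energy from $X=\partial_t$, the red-shift multiplier $N$ exploiting $\kappa>0$ at the non-degenerate horizon, commutation with $\partial_t$, $\Omega_i$ and $N$, and Sobolev embedding on slices is precisely the Dafermos--Rodnianski route, and the bifurcate-horizon/characteristic-data argument you mention as an alternative is the original Kay--Wald mechanism. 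You also correctly identify the near-horizon degeneracy of the Killing energy as the crux, and correctly tie the availability of the red-shift estimate to non-extremality.

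There is, however, one step in your sketch that would not go through as written: the claimed reduction, via Birkhoff, of the whole problem to the Schwarzschild exterior. The theorem bounds $\Phi$ at all points exterior to \emph{and on} the horizon, and the early portion of the event horizon --- together with part of its exterior neighbourhood --- lies \emph{inside} the collapsing matter, where the metric is dynamical, $\partial_t$ is not Killing (so the deformation tensor of your multiplier does not vanish there), and the energy identity over the region you describe acquires a sign-indefinite flux through the timelike stellar surface that you list as a boundary but never estimate. The standard repair --- and the one the paper itself deploys at the analogous point in its proof of Theorem~1 --- is finite speed of propagation: the non-Schwarzschild portion of the spacetime to the past of a suitably chosen spacelike surface $\Sigma_2$ meets the relevant causal region in a compact set, so local energy estimates (e.g.\ Proposition~7.4.5 of~\cite{hawking_ellis_1973}) bound the solution there and furnish finite data on $\Sigma_2$, after which your exterior Schwarzschild argument applies verbatim to the future of $\Sigma_2$. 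With that insertion your outline matches the cited proofs; without it, the bound on the part of $\mathcal{H}^+$ inside the star, and on the exterior region causally influenced by the stellar interior, is not established.
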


\begin{figure*}[thp!]
\hspace{0.2cm}
\includegraphics[width =0.75\columnwidth]{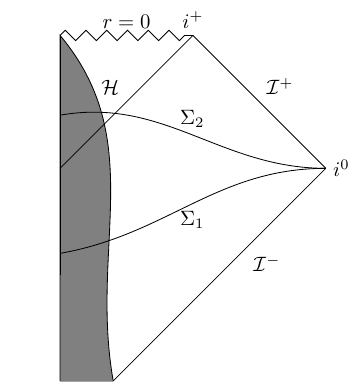}
\hspace{1.5cm}
\includegraphics[width = 1\columnwidth]{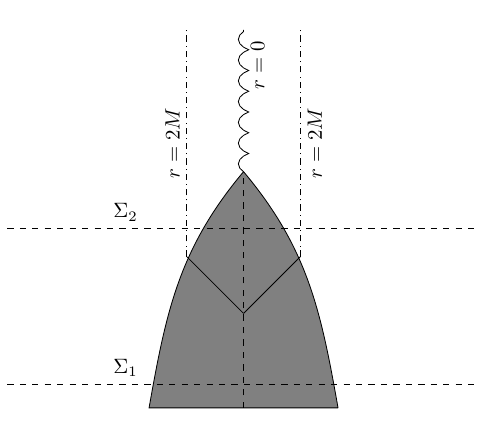}
    \caption{\textbf{Left:} Conformal diagram for a spherically symmetric collapse of matter into a Schwarzschild BH in GR. In this figure, the shaded region is worldtube of the stellar interior, $\Sigma_1$ denotes a Cauchy surface on which we prescribe initial data. $\Sigma_2$ denotes another hypersurface to the future of $\Sigma_1$ used in the proof of Theorem. 1. Theorem 1 in Sec.~\ref{sec:Kay-Wald} shows that solutions to Eq.~(\ref{eq:EOM-Decoupled}) is regular on the region of spacetime to the future of $\Sigma$, on and exterior to the event horizon $\mathcal{H}$.
    \textbf{Right:} Shows a non-compactified version of the figure on the left.}
    \label{fig:Conformal-Spherically-Symmetric-Collapse}
\end{figure*}
With the Kay-Wald theorem stated, we now generalize it to show that to $O(\epsilon)$ the solution to Eq.~(\ref{eq:EOM-Decoupled}) always remains bounded in a spherically symmetric collapse.
\begin{theorem}
Let (M,g) be a spherically symmetric spacetime representing the collapse to a spherically symmetric BH with a Penrose diagram as shown in Fig.~\ref{fig:Conformal-Spherically-Symmetric-Collapse} in GR. Let $\Phi$ be a scalar field which is a solution to Eq.~(\ref{eq:EOM-Decoupled}) with initial data described on a Cauchy surface prior to the collapse. We assume that the initial data satisfies the assumptions of the initial data as stated in the Kay-Wald theorem. Then, there exists a constant $C$ such that for all points $p$ exterior to and on the horizon,
\begin{equation}
    |\Phi(p)|<C.
\end{equation}
\end{theorem}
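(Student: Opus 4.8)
The plan is to reduce the inhomogeneous problem to the homogeneous Kay--Wald setting by peeling off a bounded particular solution that absorbs the source $-\epsilon\,\mathcal{F}^{(0)}$. Because the background settles to a static Schwarzschild exterior to the future of $\Sigma_2$, there the curvature source $\mathcal{F}^{(0)}$ becomes time independent and coincides with the static invariant already appearing on the right-hand side of Eq.~\eqref{eq:Scalar-Field-r}. The first step is therefore to invoke the \textbf{Type 1} static solution constructed in Sec.~\ref{sec:Two-Para-Sol}: since the source is regular on and exterior to the horizon and decays faster than $\rs^{-4}$, the Case 1 analysis furnishes a static profile $\Phi_{\mathrm{stat}}(\rs)$ solving $\Box\Phi_{\mathrm{stat}}=-\epsilon\,\mathcal{F}^{(0)}$ that is bounded on and outside the horizon and falls off like $1/\rs$ at infinity. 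This $\Phi_{\mathrm{stat}}$ is the natural candidate to carry the ``persistent'' part of the source.

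Next I would extend $\Phi_{\mathrm{stat}}$ to the whole causal future of $\Sigma_1$ as a smooth, bounded function $\Phi_P$ that agrees with $\Phi_{\mathrm{stat}}$ in the static region to the future of $\Sigma_2$, decays at $i^0$, and is regular across the dynamically forming event horizon $\mathcal{H}$. By construction $\Box\Phi_P+\epsilon\,\mathcal{F}^{(0)}=0$ identically in the static region, so the remainder $S:=\Box\Phi_P+\epsilon\,\mathcal{F}^{(0)}$ is supported only in the transient collapse region between $\Sigma_1$ and $\Sigma_2$. There the metric, the curvature $\mathcal{F}^{(0)}$, and the chosen extension are all smooth and bounded on and exterior to $\mathcal{H}$ (the curvature diverges only at the central singularity $r=0$, which lies to the future and interior, cf.~Fig.~\ref{fig:Conformal-Spherically-Symmetric-Collapse}), so $S$ is bounded and decays suitably at $i^0$.

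The difference $\chi:=\Phi-\Phi_P$ then solves $\Box\chi=-S$ with initial data on $\Sigma_1$ that inherits the decay of the Kay--Wald data minus the decaying data of $\Phi_P$. The key point is that $S$ is now a genuinely transient source: to the future of $\Sigma_2$ it vanishes and $\chi$ obeys the homogeneous wave equation on the Schwarzschild exterior, so the original Kay--Wald boundedness applies to $\chi$ directly, provided its Cauchy data on $\Sigma_2$ is controlled and decays at $i^0$. On the finite collapse slab between $\Sigma_1$ and $\Sigma_2$, boundedness of $\chi$ (and hence of its data on $\Sigma_2$) follows from standard finite-time energy/Duhamel estimates for the wave equation: the contribution of the bounded, spatially decaying source $S$ integrated over a finite interval is bounded, finite speed of propagation ties the large-$\rs$ behaviour to the decaying data, and Kay--Wald handles the source-free piece. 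Adding $\sup|\Phi_P|<\infty$ back yields $|\Phi(p)|<C$ for all $p$ on and exterior to the horizon.

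The main obstacle I anticipate is precisely the time persistence of the source. A naive retarded Green's-function (Duhamel) representation of $\Phi$ would integrate a source that never decays in the permanent static exterior, so the estimate could accumulate over infinite affine time and fail to close; recognizing that this persistent part must be \emph{subtracted} using the static \textbf{Type 1} solution, rather than estimated directly, is the crux of the extension. A secondary technical point is ensuring that $\Phi_P$, and hence $S$, remain regular across the event horizon, which is teleologically defined and not located at fixed $\rs$ during collapse; this is why the Gaussian null form \eqref{eq:Gauss-Null} and the background regularity assumptions (3)--(4) are essential, and it is what distinguishes the extended statement from the purely homogeneous one.
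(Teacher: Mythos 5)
Your proposal is correct and takes essentially the same route as the paper: the paper likewise absorbs the persistent source by subtracting the bounded static ($Q=0$, Type 1) particular solution in the Schwarzschild region, applies the original Kay--Wald theorem to the homogeneous remainder $\tilde{\Phi}=\Phi-\Phi_0$ with a triangle inequality, and controls the transient collapse region between $\Sigma_1$ and $\Sigma_2$ via well-posedness and the causal-propagation property of the wave equation (Proposition 7.4.5 of Hawking--Ellis), after which the evolution to the future of $\Sigma_2$ is just evolution in Schwarzschild. Your globally extended $\Phi_P$ with source $S$ localized to the collapse slab is merely a different technical packaging of the same decomposition and finite-time step.
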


\begin{proof}
Let us follow the same train of thought as in the proof of the Kay-Wald theorem~\cite{Kay_1987}, starting by proving the theorem in Schwarzschild spacetime. In this spacetime, one can find a particular solution to Eq.~(\ref{eq:EOM-Decoupled}).
Working in Schwarzschild coordinates, the particular solution is given by integrating Eq.~(\ref{eq:Scalar-Field-Integrated}) with $Q=0$,
\begin{equation}\label{eq:Hom-sol-KW}
    \Phi_0(\rs) = \int_{r'=\rs}^{\infty}  \left( \frac{\epsilon}{r'(r'-2M)} \int_{2M}^{r'} \mathcal{F}(x) \, x^2 \, dx \right) dr'.
\end{equation}
The curvature scalar $\mathcal{F}$ is bounded on and outside the horizon and goes to zero far away from the BH.
In particular, we also require that the curvature scalar goes to zero at least as $\mathcal{O}(r^{-4})$. We now rewrite the above equation as,
\begin{align}
    \Phi_0(\rs) = \int_{r'=\rs}^{\infty}  \left( \frac{\epsilon}{r'(r'-2M)} \int_{2M}^{\infty} \mathcal{F}(x) \, x^2 \, dx \right) dr'  \nonumber \\
   - \int_{r'=\rs}^{\infty}  \left( \frac{\epsilon}{r'(r'-2M)} \int_{r'}^{\infty} \mathcal{F}(x) \, x^2 \, dx \right) dr'\,,
\end{align}
the first term on the right-hand side of the above equation goes to zero at least as fast as $\mathcal{O}(r^{-1})$ at spatial infinity and the second term on the right hand side of the equation falls off at least as fast as $\mathcal{O}(r^{-2})$. This means that $\Phi_0$ falls off as $\mathcal{O}(r^{-1})$ at spatial infinity. 
By repeating the local analysis present in Sec.~\ref{sec:Case-1-Scalar-Field}, we see that the solution $\Phi_0$ is bounded on the EH and the bifurcation 2-sphere. Since the curvature scalar is regular throughout the exterior Schwarzschild wedge, $\Phi_0$ is bounded on the exterior Schwarzschild wedge as well.
Therefore, the particular solution $\Phi_0$ is bounded on the exterior Schwarzschild wedge, the EH and on the bifurcation 2-sphere. Let us denote the maximum value of $\Phi_0$ by $C_1$.

We now apply the Kay-Wald theorem to the field $\Tilde{\Phi} = \Phi - \Phi_0$, which satisfies the wave equation, $\Box \Tilde{\Phi} = 0$.
Let $C_2$ be the bound for $\Tilde{\Phi}$ we obtain from the Kay-Wald theorem.
We now use the triangle inequality,
\begin{align}
    |\Phi| \leq |\Phi - \Phi_0| + |\Phi_0| \leq C_1 +C_2
\end{align}
This gives us a uniform bound on $\Phi$ on the Schwarzschild spacetime. 

We now use the causal propagation property of the wave equation (see e.g.~Proposition 7.4.5 in~\cite{hawking_ellis_1973}) in Eq.~(\ref{eq:EOM-Decoupled}) to extend our result to the spherically symmetric collapsing spacetime, $(M,g)$.
To see how this works, suppose that we specify initial data for the scalar field on the hypersurface $\Sigma_1$ as shown in Fig.~\ref{fig:Conformal-Spherically-Symmetric-Collapse}.
Draw another hypersurface $\Sigma_2$ to the future of $\Sigma_1$ as also shown in Fig.~\ref{fig:Conformal-Spherically-Symmetric-Collapse}.
The solution to Eq.~(\ref{eq:EOM-Decoupled}) is well posed~\cite{hawking_ellis_1973,dafermos2008lectures} and in particular propagates at a finite speed in the volume between these two hypersurfaces. By Proposition 7.4.5 of~\cite{hawking_ellis_1973}, the solution to Eq.~(\ref{eq:EOM-Decoupled}) is bounded in this region.
The spacetime to the future of $\Sigma_2$ and outside the EH is just the Schwarzschild spacetime.
Therefore, we can think of the evolution of scalar field to the future of this surface as the evolution of the scalar field in Schwarzschild spacetime, for which we already showed that the solution is bounded.
Therefore, the solution is bounded on and exterior to the EH in $(M,g)$. 
\end{proof}

\begin{corollary}
Let $(M,g)$ be a spherically symmetric spacetime representing the collapse to a spherically symmetric BH with a Penrose diagram as shown in Fig.~\ref{fig:Conformal-Spherically-Symmetric-Collapse} in GR. Let $\Phi $ represent a solution to Eq.~\eqref{eq:EOM-Decoupled}. Then, $\Phi$ settles to a Type 1 regular and spherically symmetric solution to Eq.~\eqref{eq:EOM-Decoupled} after the BH formation.

\end{corollary}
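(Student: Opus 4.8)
The plan is to show that, once the exterior geometry has relaxed to Schwarzschild, the only bounded static solution of Eq.~\eqref{eq:EOM-Decoupled} is the regular Type 1 profile, and that the transient radiative piece decays away by Price's law. First I would invoke causal propagation exactly as in the proof of Theorem~1: after the event horizon crosses the stellar surface, the region on and exterior to $\mathcal{H}$ is isometric to a portion of the Schwarzschild exterior, so the field equation there reduces to $\Box \Phi = -\epsilon\,\mathcal{F}^{(0)}$, with $\mathcal{F}^{(0)}$ the time-independent curvature invariant of Schwarzschild.

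Next I would split the field as $\Phi = \Phi_0 + \tilde\Phi$, where $\Phi_0(\rs)$ is the static particular solution with $Q=0$ already exhibited above, so that $\tilde\Phi$ satisfies the source-free equation $\Box\tilde\Phi = 0$. Since $\Phi_0$ is bounded on and exterior to the horizon by construction, and $\Phi$ is bounded by Theorem~1, the homogeneous remainder $\tilde\Phi = \Phi - \Phi_0$ is also uniformly bounded on and exterior to $\mathcal{H}$.

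The decisive step is to argue that $\tilde\Phi \to 0$. In spherical symmetry only the monopole survives, and the static homogeneous solutions of $\Box\tilde\Phi = 0$ on Schwarzschild are a constant together with a mode $\propto \int dr/[r(r-2M)] \sim \ln(1-2M/r)$ that diverges logarithmically at the horizon. The latter is excluded immediately by the boundedness just established, while the constant is fixed to zero by the requirement that $\Phi \to 0$ at spatial infinity, a condition $\Phi_0$ already satisfies because $\mathcal{F}$ decays faster than $r^{-4}$. The remaining time-dependent part of $\tilde\Phi$ is a source-free wave on Schwarzschild with suitably decaying initial data, to which Price's law applies: it decays with a power-law tail at late times, both along the horizon and at fixed radius. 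Hence $\tilde\Phi \to 0$ and $\Phi \to \Phi_0$, the regular, spherically symmetric Type 1 solution.

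The main obstacle I anticipate is the interplay between the two inputs: Price's law by itself controls only the decay of the radiative, time-dependent content and does not on its own forbid a static divergent monopole from being left behind. It is precisely the uniform bound from Theorem~1 that removes the horizon-divergent static mode, so the crux of the argument is that boundedness and Price's law must be used in tandem---the former to kill the would-be divergent static residue, the latter to shed the transient radiation. A secondary technical point is to confirm that the initial data induced for $\tilde\Phi$ on a slice after the geometry has settled obeys the decay hypotheses required by Price's law; this follows from the boundedness of $\Phi$ and the decay of the source $\mathcal{F}^{(0)}$, but should be stated explicitly.
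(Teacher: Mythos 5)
Your proposal is correct and follows essentially the same route as the paper's proof: invoke the uniform bound from Theorem~1, note that the exterior settles to Schwarzschild, and apply Price's law to shed the radiative content, leaving the regular static ($Q=0$) solution, i.e.\ Type~1. The only difference is that you make explicit what the paper leaves implicit --- the splitting $\Phi = \Phi_0 + \tilde\Phi$ (which the paper uses in Theorem~1 but not in the corollary), the classification of static monopole homogeneous solutions into a constant plus the horizon-divergent $\ln\left(1 - 2M/r\right)$ mode, and the correct observation that it is the boundedness from Theorem~1, not Price's law, that excludes the divergent static residue --- so this is a faithful filling-in of the paper's argument rather than a different one.
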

\begin{proof}
By Theorem 1 established above, the scalar field is bounded on and exterior to the EH. 
The spacetime eventually settles to a Schwarzschild BH.
We now use Price's law~\cite{Price-1972,Gundlach:1993tp} which states that the asymptotically flat solutions to the massless Klein-Gordon equation decay away in power law tails in Schwarzschild spacetime; for a mathematically precise derivation of Price's law, see~\cite{2005,angelopoulos2018latetime} and references therein. 
Therefore, the wave modes present in $\Phi$ decay away after a sufficiently long time.
After the wave modes decay away, we are left with a regular static solution of the scalar field equation in Eq.~(\ref{eq:EOM-Decoupled}).
This then proves that, to first order in perturbation theory, the spacetime always settle to a Type 1 regular solution.
\end{proof}
\subsection{Scalar hair evolution}\label{sec:Scalar-Hair-Formula}
We now show how the scalar hair behaves in a collapsing spacetime by developing a second corollary to the above theorem.
Let us begin by defining scalar hair more precisely.
In asymptotically flat spacetimes, we can describe the spacetime region far away from the source by using asymptotically Minkowskian coordinates, $x^{\mu} = (t,x,y,z)$, such that the metric can be expanded as an asymptotic expansion centered around the Minkowski metric $\eta$,
\begin{equation}
    g_{\mu\nu} = \eta_{\mu\nu} \left[ 1 + {\cal{O}}(1/r) \right]\,;
\end{equation}
in this region of spacetime, the components of the curvature tensor scale as $\mathcal{O}(r^{-2})$. 
With this at hand, the equation for the scalar field can be described to first order in $r^{-1}$ by,
\begin{align}
    \Box_{\eta} \Phi &\sim 0\,,
\end{align}
whose solution is simply
\begin{align}
\Phi &= c_0 + \frac{\mu(u,\theta,\phi)}{r} + \mathcal{O}(r^{-2})\,.
\end{align}
In the above equation, $c_0$ is a constant, which we set to zero for asymptotically flat solutions and $u$ denotes retarded time and we also assume that there is no incoming radiation. The scalar hair $\mu$ is defined by the above equation as the $1/r$ coefficient of the far field expansion of the scalar field. The magnitude of the scalar hair quantifies the strength of the scalar field far away from a source and, thus, it controls the amount of dipole radiation emitted by accelerated hairy BHs in quadratic gravity~\cite{2012-Yagi-Stein-Yunes-Tanaka}.

For the BH solutions we considered in Sec.~\ref{sec:Two-Para-Sol}, the explicit formula for the scalar hair can be obtained by asymptotically expanding Eq.~(\ref{eq:Scalar-Field-Integrated}),
\begin{dmath}\label{eq:Scalar-Hair-def-BH}
    \mu = \lim_{r\to \infty} \left. \left(-r^2\,\partial_r \Phi\right)\right. \!\!=\!\lim_{r\to \infty} \left[\frac{{\epsilon}\,r^2}{D(r)K(r)} \int_{0}^{r} \!\! \mathcal{F}(x) K(x)\, dx - \! \frac{\epsilon \,Q \,r^2}{D(r)K(r)} \right].
\end{dmath}
Since the spacetime is asymptotically flat, 
\begin{align}\label{eq:Large-r-1}
    \limrinf D(r) = 1 \,,
    \qquad
    \limrinf \frac{r^2}{K(r)} = 1.
\end{align}
Moreover, in Sec.~\ref{sec:BHs-Action}, we assumed that the curvature scalar decays as $\mathcal{O}(r^{-4})$ and this means that the integral,
\begin{align}\label{eq:Large-r-2}
    \limrinf \int_{0}^{r} \mathcal{F}(x) K(x) dx &= \int_{0}^{\infty} \mathcal{F}(x) K(x) dx ,
\end{align}
is well defined since, 
\begin{align}\label{eq:Large-r-3}
    \limrinf \left[ \mathcal{F}(r) K(r) \right] &\sim \mathcal{O}(r^{-2})\,.
\end{align}
Combining Eqs.~\eqref{eq:Scalar-Hair-def-BH}, \eqref{eq:Large-r-1}, \eqref{eq:Large-r-2} and \eqref{eq:Large-r-3}, we see that,
\begin{equation}\label{eq:Scalar-Hair-BH-Sol}
    \mu = {\epsilon} \left(\int_{0}^{\infty} \mathcal{F}(x)K(x) dx - Q\right).
\end{equation}
Let us now introduce some terminology, which we briefly touched on in the Introduction.
The scalar hair defined in Eq.~\eqref{eq:Scalar-Hair-BH-Sol} is called ``primary' if $Q$ is \textit{not} equal to zero. 
If $Q = 0$, then it is called ``secondary'~\cite{herdeiro2018asymptotically}.
If the scalar hair is primary, then knowing only the intrinsic properties of the compact object, such as its mass, location of the EH and spin, is not sufficient to determine the scalar charge and, consequently, the behaviour of the scalar field far away from the compact object. 
On the other hand, if the scalar hair is secondary, the behaviour of the scalar field far away from the compact object as determined by its scalar charge is completely determined by the intrinsic properties of the compact object. As we have shown in Sec.~\ref{sec:Two-Para-Sol}, for regular BH solutions, $Q=0$. Therefore, Type 1 BH solutions can only possess secondary scalar hair.
\begin{figure*}[thp!]
    \includegraphics[width = 0.75\columnwidth]{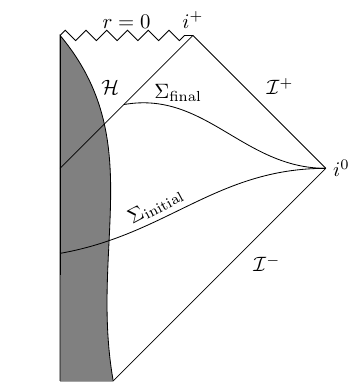}
    \hspace{1.2cm}
    \includegraphics[width = 0.7\columnwidth]{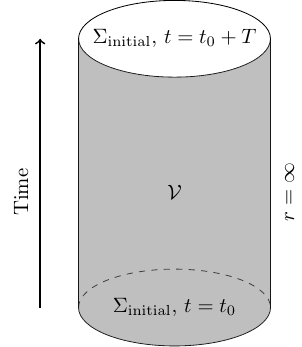}
    \caption{\textbf{Left:} Conformal diagram of a collapsing spacetime. $\Sigma_{\text{initial}}$ and $\Sigma_{\text{final}}$ are spacelike hypersurfaces draw prior to and after the collapse respectively. We assume that spacetime is static around $\Sigma_{\text{initial}}$ and $\Sigma_{\text{final}}$, describing the initially static star which collapses and settles to a final static BH.
    \textbf{Right:} The spacetime volume $\mathcal{V} =[t_0,t_0 + T] \times \Sigma_{\text{initial}}$ is shown in the figure. The boundary $\partial \mathcal{V}$ consists of 3 regions, $\{t_0 + T\}\times\Sigma_{\text{initial}}$, $\{t_0\}\times\Sigma_{\text{initial}}$ and $[t_0,t_0+T]\times \{r = \infty\}$.
    The normals to $\{t_0 + T\}\times\Sigma_{\text{initial}}$ and $\{t_0\}\times\Sigma_{\text{initial}}$ point in opposite directions.}
    \label{fig:Collapse-Hair-Formula}
\end{figure*}

The above expressions can also be generalised to regular spacetimes, such as those of a NS.
Suppose the metric for a spherically symmetric NS spacetime is given by
\begin{equation}\label{eq:Metric-Spherical-NS}
    ds^2 = -A_{\text{NS}}(r) dt^2 + B_{\text{NS}}(r) dr^2 + r^2\, d\Omega^2\,.
\end{equation}
We assume that functions, $A_{\text{NS}}$ and $B_{\text{NS}}$ are regular and non-zero functions at $r=0$.
Using the coordinates introduced above, the equation for the scalar in Eq.~(\ref{eq:EOM}) can now be written as,
\begin{equation}\label{eq:Scalar-EoM-NS}
     \frac{1}{r^2 \sqrt{A_{\text{NS}}(r)B_{\text{NS}}(r)}} \partial_r \left[r^2 \sqrt{\frac{A_{\text{NS}}}{B_{\text{NS}}}} \left(\partial_r \Phi_{\text{NS}}\right) \right] = -\epsilon \; \mathcal{F}_{\text{NS}}.
\end{equation}
We can now integrate the above equation to get,
\begin{dmath}\label{eq:Scalar-Field-NS-1}
    r^2 \sqrt{\frac{A_{\text{NS}}}{B_{\text{NS}}}} \left(\partial_r \Phi_{\text{NS}}\right) - \left.\left[r^2 \sqrt{\frac{A_{\text{NS}}}{B_{\text{NS}}}} \; \left(\partial_r \Phi_{\text{NS}}\right) \right]\right|_{r=0} = -\epsilon \int_{0}^{r} \mathcal{F}_{\text{NS}}(x)\, x^2 \, \sqrt{A_{\text{NS}} B_{\text{NS}}}\, dx
\end{dmath}
We know that the scalar field has to be regular at the center of the star because this is a necessary requirement for regular stellar configurations to exist. 
We further know that the functions $A_{\text{NS}}$ and $B_{\text{NS}}$ are regular at the center of the star.
This means that,
\begin{equation}
    \lim_{r \to 0} \left[r^2 \sqrt{\frac{A_{\text{NS}}}{B_{\text{NS}}}} \; \left(\partial_r \Phi_{\text{NS}}\right) \right]= 0.
\end{equation}
Equation~\eqref{eq:Scalar-Field-NS-1} can now be simplified to get,
\begin{equation}
    \partial_r \Phi_{\text{NS}} = -\frac{\epsilon}{r^2}\sqrt{\frac{B_{\text{NS}}}{A_{\text{NS}}}} \int_{0}^{r} \mathcal{F}_{\text{NS}}(x)\, x^2 \, \sqrt{A_{\text{NS}} B_{\text{NS}}}\, dx\,.
\end{equation}
Far away from the NS we are in an asymptotically flat spacetime so we must have that 
\begin{align}
    \limrinf \left[ A_{\text{NS}}(r)-1 \right] &= 0\,, 
    \\
    \limrinf \left[B_{\text{NS}} (r)- 1 \right] &= 0 \,,
    \\
    \limrinf \left[r^2\,\mathcal{F}_{\text{NS}}(r) \right] &= 0.
\end{align}
We can now repeat the analysis as in Eqs.~\eqref{eq:Large-r-1},~\eqref{eq:Large-r-2} and \eqref{eq:Large-r-3} to find the scalar hair in a NS spacetime
\begin{equation}\label{eq:Scalar-Hair-NS-Sol}
    \mu_{\text{NS}} = {\epsilon} \left(\int_{0}^{\infty} \mathcal{F}_{\text{NS}}(x) \,x^2 \, \sqrt{A_{\text{NS}} B_{\text{NS}}} \,dx\right).
\end{equation}

Let us pause now to compare the scalar hair for the NS spacetime obtained in Eq.~\eqref{eq:Scalar-Hair-NS-Sol} and that obtained for the BH spacetime in Eq.~\eqref{eq:Scalar-Hair-BH-Sol}.
We first note that the coordinates used in these equations are not the same; 
for the BH spacetime we used Gaussian null coordinates [see Eq.~\eqref{eq:Gauss-Null}], while for the NS spacetime we use spherical polar coordinates. If we had used the same coordinates, the expressions would have been functionally identical. However, although the functional form of Eqs.~\eqref{eq:Scalar-Hair-NS-Sol} and \eqref{eq:Scalar-Hair-BH-Sol} is the same, these charges do not evaluate to the same result. This is because the NS spacetime has different metric functions than the BH spacetime, which is why we identified this through a NS subscript. The covariant form of Eqs.~\eqref{eq:Scalar-Hair-NS-Sol} and \eqref{eq:Scalar-Hair-BH-Sol} are presented below, in Eqs.~\eqref{eq:Scalar-Hair-NS-covariant} and \eqref{eq:Scalar-Hair-BH-covariant}.

With the concept of scalar hair defined, and the theorem and corollary established in the previous section, we can now develop a second corollary. 

\begin{corollary}
Let $(M,g)$ be a spherically collapsing spacetime for theories described by Eq.~(\ref{eq:Action}). 
Let $\Sigma_{\text{initial}}$ and $\Sigma_{\text{final}}$ be Cauchy surfaces drawn prior and after the collapse to a BH, as shown in Fig.~\ref{fig:Collapse-Hair-Formula}. We assume that the spacetime near $\{t_0 \} \times \Sigma_{\text{initial}}$ and $\{t_1 \} \times\Sigma_{\text{final}}$ is static.
The spacetime near $\{t_0 \} \times \Sigma_{\text{initial}}$ describes the initially static star that collapses and the spacetime near $\{t_1 \} \times\Sigma_{\text{final}}$ describes the final static BH.
Then, to first order in perturbation theory,
\begin{enumerate}
    \item The scalar hair $\mu_{\text{NS}}$ prior to the collapse is given by,
    \begin{equation}\label{eq:Scalar-Hair-NS-covariant}
        \mu_{\text{NS}} = \frac{\epsilon}{4\pi} \lim_{\epsilon \to 0} \left(\lim_{T\to 0}\frac{1}{T}\int_{[t_0,t_0 + T] \times \Sigma_{\text{initial}}} \mathcal{F}\sqrt{-g}\, d^4x \right).
    \end{equation}
    \item The scalar hair in the final BH spacetime is given by,
    \begin{equation}\label{eq:Scalar-Hair-BH-covariant}
        \mu_{\text{BH}} =  \frac{\epsilon}{4\pi} \lim_{\epsilon \to 0} \left(\lim_{T\to 0}\frac{1}{T}\int_{[t_1,t_1+ T] \times \Sigma_{\text{final}}} \mathcal{F}\sqrt{-g}\, d^4x \right).
    \end{equation}
\end{enumerate}
In particular, scalar hair is secondary before and after gravitational collapse.
\end{corollary}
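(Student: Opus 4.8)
The plan is to recognize that Eqs.~\eqref{eq:Scalar-Hair-NS-covariant} and \eqref{eq:Scalar-Hair-BH-covariant} are nothing but coordinate-invariant repackagings of the explicit hair formulae already derived in Eqs.~\eqref{eq:Scalar-Hair-NS-Sol} and \eqref{eq:Scalar-Hair-BH-Sol}. The whole proof therefore reduces to evaluating the covariant four-volume integrals in coordinates adapted to the (locally) static regions near $\Sigma_{\text{initial}}$ and $\Sigma_{\text{final}}$ and checking that they collapse onto the one-dimensional radial integrals already in hand.

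First I would treat the initial NS slice. Near $\{t_0\}\times\Sigma_{\text{initial}}$ the metric is Eq.~\eqref{eq:Metric-Spherical-NS}, for which $\sqrt{-g}=r^2\sqrt{\ANS\BNS}\,\sin\theta$. Because the configuration is static there, $\mathcal{F}$ is independent of $t$, so the integral over $[t_0,t_0+T]$ simply produces a factor of $T$ that cancels the $1/T$, and the $\lim_{T\to 0}$ is only needed to guarantee that the slab stays inside the static region where this independence holds. The angular integration yields $4\pi$, cancelling the prefactor $1/(4\pi)$, and what survives is exactly $\epsilon\int_0^\infty\mathcal{F}_{\text{NS}}\,x^2\sqrt{\ANS\BNS}\,dx$, i.e.~Eq.~\eqref{eq:Scalar-Hair-NS-Sol}. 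Equivalently, and more transparently, I would integrate the static scalar equation \eqref{eq:Scalar-EoM-NS}, use regularity at the stellar center to kill the inner boundary term, and read off $\mu_{\text{NS}}$ from the $1/r$ coefficient of the far-field expansion; this makes manifest that the volume integral of the source $\mathcal{F}$ is the same object as the far-field hair.

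The BH slice is handled identically but now in the Gaussian null coordinates of Eq.~\eqref{eq:Gauss-Null}, where a short computation gives $\sqrt{-g}=K(r)\sin\theta$. Staticity in $v$ again trivializes the time integral, the angles give $4\pi$, and the radial integral reduces to $\epsilon\int_0^\infty\mathcal{F}\,K\,dx$. The one genuine input here is the inner boundary term at the horizon $r=0$: comparing with Eq.~\eqref{eq:Scalar-Hair-BH-Sol}, the covariant formula reproduces the hair only if $Q=0$. This is supplied by the preceding corollary, which guarantees that the end state of collapse is a Type~1 (regular) solution, for which $Q=0$ by the analysis of Sec.~\ref{sec:Two-Para-Sol}. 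I would also note that $\lim_{\epsilon\to0}\mathcal{F}$ simply instructs us to evaluate the curvature source on the GR background $g^{(0)}$, consistent with the first-order scheme of Eq.~\eqref{eq:EOM-Decoupled}; together with the explicit prefactor this keeps the hair at $\mathcal{O}(\epsilon)$.

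The secondary-hair conclusion then follows immediately: the NS integral contains no horizon contribution and no free constant, so $\mu_{\text{NS}}$ is fixed entirely by the regular stellar metric, i.e.~by the intrinsic structure of the star; and for the BH, $Q=0$ means, by the definition following Eq.~\eqref{eq:Scalar-Hair-BH-Sol}, that the hair is secondary. I expect the main obstacle to be bookkeeping rather than conceptual: one must fix the normalization of the time coordinate consistently in the two cases --- $\partial_t$ for the star and $\partial_v$ for the hole should both be the static Killing field normalized to unit norm at spatial infinity, so that the common $1/T$ factor measures the same asymptotic time rate --- and one must be careful that for the BH the slice $\Sigma_{\text{final}}$ is the exterior region running from the horizon at $r=0$ out to infinity, so that the otherwise dangerous contribution from behind the horizon never enters and the only inner boundary term is precisely the one that the preceding corollary sets to zero.
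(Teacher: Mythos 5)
Your proposal is correct, and all the load-bearing inputs match the paper's: staticity near the two slices trivializes the time integral, the inner boundary conditions (regularity at the stellar center for the star, $Q=0$ for the hole) are what make the volume integral equal the hair, $Q=0$ is supplied by Corollary~1.1 via the Type~1 end state, and $\lim_{\epsilon\to 0}$ enforces evaluation of $\mathcal{F}$ on the GR background consistent with Eq.~\eqref{eq:EOM-Decoupled}. The one genuine difference is direction: you verify Eqs.~\eqref{eq:Scalar-Hair-NS-covariant} and \eqref{eq:Scalar-Hair-BH-covariant} by evaluating the covariant four-volume integrals in adapted coordinates (computing $\sqrt{-g}=r^2\sqrt{\ANS\BNS}\sin\theta$ and $\sqrt{-g}=K(r)\sin\theta$, cancelling $T$ against $1/T$ and $4\pi$ against the prefactor) and matching against the explicit radial formulae \eqref{eq:Scalar-Hair-NS-Sol} and \eqref{eq:Scalar-Hair-BH-Sol} already in hand, whereas the paper derives the covariant NS formula directly from the field equation: it integrates $\nabla^{\alpha}\nabla_{\alpha}\Phi+\epsilon\mathcal{F}=0$ over $\mathcal{V}=[t_0,t_0+T]\times\Sigma_{\text{initial}}$, applies Stokes' theorem, kills the two constant-time boundary contributions by staticity (opposite-pointing normals), and identifies the surviving spatial-infinity boundary term as $-4\pi T\mu_{\text{NS}}$ --- essentially the route you sketch only as an ``equivalently'' aside. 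Your direct evaluation is more economical given that Sec.~\ref{sec:Scalar-Hair-Formula} has already produced the explicit formulae; the paper's Stokes argument is manifestly covariant, independent of any coordinate adaptation, and connects more naturally to the topological-charge arguments (Chern--Gauss--Bonnet) invoked immediately afterwards. Your explicit handling of the Killing-field normalization ($D\to 1$ and $\ANS\to 1$ at infinity, so the $1/T$ factors measure the same asymptotic time rate in both formulae) is a point the paper leaves implicit, and is a worthwhile addition.
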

\begin{proof}
The formula for the scalar hair in the BH spacetime is given in Eq.~\eqref{eq:Scalar-Hair-BH-Sol}. 
By Theorem 1 and Corollary 1.1, we know that the scalar field settles to a regular BH solution after gravitational collapse. This means that $Q = 0$ in Eq.~\eqref{eq:Scalar-Hair-BH-Sol}. Therefore, the scalar hair after gravitational collapse is given by,
\begin{equation}\label{eq:Scalar-Hair-BH-regular}
    \mu_{\text{BH}} = {\epsilon} \int_{0}^{\infty} \mathcal{F}(r)K(r) dr.
\end{equation}
Equations~\eqref{eq:Scalar-Hair-BH-regular} and \eqref{eq:Scalar-Hair-NS-Sol} can be rewritten covariantly as in Eqs.~\eqref{eq:Scalar-Hair-BH-covariant} and \eqref{eq:Scalar-Hair-NS-covariant} respectively.
We now show that Eq.~\eqref{eq:Scalar-Hair-NS-covariant} gives us the correct formula for the scalar hair in the NS spacetime.
We start by writing the scalar field in the NS covariantly as,
\begin{equation}
    \nabla^{\alpha}(\nabla_{\alpha}\,\Phi_{\text{NS}}) + \epsilon \mathcal{F} = 0 \,,
\end{equation}
We now integrate the above equation on a spacetime volume $\mathcal{V} = [t_0,t_0 +T] \times \Sigma_{\text{initial}}$ where $T$ is a small positive number which we use to denote the length of the time interval,
\begin{equation}\label{eq:Stokes-NS-Apply}
   \int_{\mathcal{V}}\left[ \nabla^{\alpha}(\nabla_{\alpha}\,\Phi_{\text{NS}}) + \epsilon \mathcal{F} \right]\sqrt{-g} \,d^4x = 0 \,. 
\end{equation}
We now use Stokes theorem on the first term to get,
\begin{equation}\label{eq:Boundary-NS-Stokes}
    \int_{\mathcal{V}} \nabla^{\alpha}(\nabla_{\alpha}\,\Phi) \,{\sqrt{-g}\,d^4x} = \int_{\partial \mathcal{V}} n^{\alpha} \nabla_{\alpha} \Phi \, \sqrt{h}\, d^3x \, ,
\end{equation}
where, $\partial \mathcal{V}$ denotes the boundary of the volume $\mathcal{V}$, $n^{\alpha}$ denotes the normal vector on the boundary, and $h$ is the determinant of the induced metric on $\partial \mathcal{V}$.
The spacetime of a NS has no interior boundary. Therefore, $\partial\mathcal{V}$ consists of 3 different regions as shown in Fig.~\ref{fig:Collapse-Hair-Formula}.
The integral over the upper and the lower constant time hypersurfaces, $\{t_0 + T\}\times\Sigma_{\text{initial}}$ and $\{t_0\}\times\Sigma_{\text{initial}}$, vanish because we assume that the scalar field is independent of time at this stage of collapse and the normal vectors on these hypersurfaces point in the opposite directions.
Therefore, the only contribution to the boundary integral comes from the boundary at spatial infinity. Furthermore, we consider asymptotically flat spacetime and, thus, the induced metric and the normal at spatial infinity in asymptotically Cartesian coordinates is given by,
\begin{equation}
    ds^2_{i^0} = -dt^2 + r^2 d\Omega^2\,, \, n^{\alpha}_{i^0} = \delta^{\alpha}_r\,.
\end{equation}
We now simplify Eq.~\eqref{eq:Boundary-NS-Stokes} as,
\begin{align}
    \int_{\mathcal{V}} \nabla^{\alpha}(\nabla_{\alpha}\,\Phi) \,\sqrt{-g}\,d^4x &= \int_{t=t_0}^{t_0 + T}\underbrace{\lim_{r\to\infty}\left[ r^2\partial_r \Phi_{\text{NS}}(r)\right]}_{-\mu_{\text{NS}}}  dt\, d\Omega \nonumber ,\\
    &= -4\pi T \mu_{\text{NS}} \, .
\end{align}
Using the above equation to simplify Eq.~\eqref{eq:Stokes-NS-Apply} one finds
\begin{align}
    -4\pi T \mu_{\text{NS}} + \int_{\mathcal{V}}\epsilon \,\mathcal{F}\sqrt{-g} \,d^4x = 0 \,,
\end{align}
which implies that,
\begin{equation}
    \mu_{\text{NS}} = \frac{\epsilon}{4\pi}  \left(\lim_{T\to 0}\frac{1}{T}\int_{[t_0,t_0 + T] \times \Sigma_{\text{initial}}} \mathcal{F}\sqrt{-g}\, d^4x \right),
\end{equation}
where we take the limit of $T \to 0$ because we are in an initially static configuration which is about to collapse and the integral over $t$ factors out only over an infinitesimal interval.

This formula gives the covariant version of NS hair which we have written down in Eq.~\eqref{eq:Scalar-Hair-NS-covariant}.
Note that in Eq.~\eqref{eq:Scalar-Hair-NS-covariant} we take the limit as $\epsilon$ goes to zero because our corollary only holds to first order in perturbation theory.
Repeating the same line of arguments as above, one can also rewrite $\mu_{\text{BH}}$ as given in Eq.~\eqref{eq:Scalar-Hair-BH-covariant}.
\end{proof}
For curvature couplings which are topological, such as the Gauss-Bonnet scalar and the Pontragyin density, the above integral in Eq.~(\ref{eq:Scalar-Hair-BH-covariant}) can be converted to an integral over the EH only using Stokes theorem. 
Results for the scalar hair were obtained in~\cite{Prabhu-Stein-2018,Wagle:2018tyk,2013-Yagi-Stein-Yunes-Tanaka,2016-Yagi-Stein-Yunes} by using Stokes theorem and Noether's theorem for static BH solutions and NS solutions in dCS and sGB.
The above corollary shows that these results are valid even during a dynamical gravitational collapse.

%%%%%%%%%%%%%%%%%%%%%%%%%%%%%%%%%%%%%%%%%%%%%%%%%%%%%%%%%%%%%%%%%%%%%%%%%%%%%%%%%%%%%%%%%%
\section{Scalar Gauss Bonnet Theory and Growth of Scalar Hair} \label{sec:sGB}
The theorem and the corollaries presented in the previous section state that the scalar field settles to a Type 1 solution and they also provide formulae for the scalar hair in the initial and final states of gravitational collapse through Eqs.\ (\ref{eq:Scalar-Hair-NS-covariant}) and (\ref{eq:Scalar-Hair-BH-covariant}).
For curvature couplings that are topological, e.g. the Gauss-Bonnet invariant and the Pontraygin density, these scalar hair formulae can be integrated exactly and classified based on the topology of the spacetime.
In this section, we work with sGB theory as an example to illustrate the dynamics of the scalar field during gravitational collapse.
In sGB, the scalar hair takes on a non-zero value~\cite{Prabhu-Stein-2018,2016-Yagi-Stein-Yunes} for Type 1 BH spacetimes and it is zero in static and regular spacetimes such as that of a NS~\cite{2016-Yagi-Stein-Yunes}.
Therefore, the scalar hair \textit{grows} during gravitational collapse to a BH, the opposite to what happens in scalar-tensor theories. 
The scalar field dynamics will be studied to first order in perturbation theory in two background spacetime models: the analytical OS spacetime and a full numerical relativity-generated collapsing NS spacetime. 
In these backgrounds, we find an exact analytical solution and a numerical solution for the scalar field evolution, and use it to study the dynamics of the scalar field.
\subsection{Oppenheimer-Snyder Spacetime}\label{sec:OS-Spacetime}

The OS spacetime is one of the simplest GR  models of collapse within the general Tolman-Bondi class and it describes the spherically-symmetric collapse of matter into a BH \cite{OS-1939}. More precisely, the OS model describes the collapse of an infinitely large 3-ball of pressureless dust due to  its gravitational force, leading to the formation of a BH. 
The stellar interior is modelled as a contracting Friedmann–Lemaître–Robertson–Walker (FLRW) cosmological spacetime with a pressureless dust stress-energy tensor. 
As a consequence of Birkhoff's theorem,
the exterior spacetime is given by the Schwarzschild spacetime.
In this model, there are no surface energy densities and the metric and its normal derivative are continuous~\cite{Poisson:2009pwt,Kanai_2011,Adler_2005} across the stellar surface.

Kanai et al~\cite{Kanai_2011} introduced a global coordinate system to describe the OS spacetime. These global coordinates are obtained by using Painleve-Gullstrand (PG)-type coordinates in the exterior Schwarzschild spacetime, and then gluing these coordinates to the interior spacetime. The metric in the global coordinates, $x^{\mu} = \left(t,r,\theta,\phi\right)$ is given by
\begin{align}\label{eq:PGmetricGlobal}
    ds^2 &= -dt^2 + [dr + v(t,r)\,dt]^2 + r^2 d\Omega^2 \nonumber ,\\
    v(t,r) &= \begin{cases}
    \frac{2r}{3(-t)} & r<R(t)\\
    \sqrt{\frac{2M}{r}} & r \geq R(t)
    \end{cases}.
\end{align}
In the above equation, $R(t)$ denotes the surface of the star, an expression for which (together with other important surfaces) are given in Table \ref{tab:causal-structure}.
In these coordinates, the fluid ball starts at $r=\infty$ at $t=-\infty$ and it collapses to a singularity at $t=0$. 
The causal structure of OS spacetime is shown in Fig.~\ref{fig:Conformal-Diag-OS}.
The coordinate transformation between these global coordinates to Schwarzschild coordinates in the exterior and FLRW coordinates in the interior is reviewed in Appendix \ref{Appendix:Coordinate-Transformations}.

\begin{figure}
    \includegraphics[width = 0.7\columnwidth]{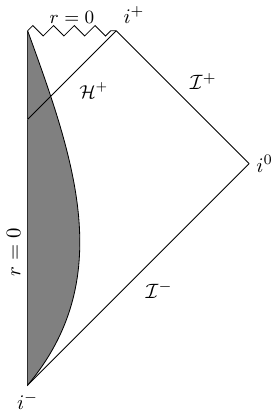}
    \caption{Penrose diagram of Oppenheimer-Snyder spacetime.}
    \label{fig:Conformal-Diag-OS}
\end{figure}
%%%%%%%%%%%%%%%%%%%%%%%%%%%%%%%%%%%%%%%%%%%%%%%%%%%%%%%%%%%%%%%%%%%%%%%%%%%%%%%%%%%%%%%%%%%%%%%%%%%%%%
\begin{table*}[thp]
    \centering
    \begin{tabular}{|c|c|c|}
    \hline
    Surface & Global Coordinates & FLRW Coordinates\\
    \hline
    Surface of the Star & $r = R(t) = \left(\frac{9M(-t)^2}{2}\right)^{1/3}$ & $r_c = (\frac{9}{2})^{1/3}M$ \T\B\\
    \hline
      Event Horizon   & $r = H(t) = 3t + 3R(t)$ & $r_c = 3M\left[(\frac{9}{2})^{1/3}-(\frac{-t}{M})^{1/3}\right] $  \T\B\\
    \hline
    Apparent Horizon & $r = -\frac{3t}{2}$ & $r_c = \frac{3M}{2}(\frac{-t}{M})^{1/3} $ \T\B\\
    \hline
    Outgoing Null Geodesics  & $r = 3t - u(-t/M)^{2/3}$ & $r_c = -3\sqrt{a(t)} - u$ \T\B\\
    \hline
    Ingoing Null Geodesics  & $r = -3t + v(-t/M)^{2/3}$ & $r_c = v+ 3\sqrt{a(t)}$ \T\B \\
    \hline
    \end{tabular}
    \caption{Important surfaces in the OS spacetime. The equations for the surfaces are given in both global PG coordinates and in the familiar FLRW coordinates. The coordinate transformation functions between these two coordinate systems in given in Appendix \ref{Appendix:Coordinate-Transformations}. }
    \label{tab:causal-structure}
\end{table*}
We now describe the dynamics of the scalar field in OS collapse.
In sGB theory, the value for the scalar hair in a regular static NS spacetime is zero.
After the NS collapses and settles to a static BH, the scalar hair is non-zero.
To first order in perturbation theory, the exterior spacetime is the Schwarzschild spacetime in both cases.
Why does the hair grow then?
To answer this question, we start by finding the most general spherically symmetric static solution to the sGB scalar field equation in a Schwarzschild spacetime. The sGB equation is given by
\begin{equation}\label{eq:sGB-OS}
     \Box \Phi + \epsilon\, \GB = 0\,,
\end{equation}
and the Gauss-Bonnet scalar in a Schwarzschild spacetime is
\begin{align}\label{eq:GBvalues-Schwarzschild}
    \GB =\frac{48M^2}{r^6}\,.
\end{align}
The scalar equation with a static ansatz in Schwarzschild coordinates becomes
\begin{equation}
    \frac{1}{r^2}\frac{d}{dr} \left[r^2\left(1-\frac{2M}{r}\right)\Phi'(r)\right] + \frac{48\,M^2\,\epsilon}{r^6} = 0.
\end{equation}
The most general solution to the above equation consists of a particular solution and a homogeneous solution, 
\begin{equation}
    \Phi(r) = \epsilon\left(\frac{2}{Mr} + \frac{2}{r^2} + \frac{8M}{3r^3}\right) + \frac{\epsilon\, \alpha}{M^2}{\log \left(1 - \frac{2M}{r} \right)}.
\end{equation}
We can now expand the above solution as $r$ tends to infinity to get the value of the scalar hair,
\begin{equation}
    \Phi(r) \sim \frac{2\,\epsilon(1-\alpha)}{r M } + \mathcal{O}(r^{-2}),
\end{equation}
which means that the value of the scalar hair is given by,
\begin{equation}\label{eq:NS-Hair-sGB}
    \mu = \frac{2 \epsilon(1-\alpha)}{M}.
\end{equation}

For a NS spacetime, topological arguments were derived in Refs.~\cite{2013-Yagi-Stein-Yunes-Tanaka,2016-Yagi-Stein-Yunes} to show that the scalar field must be hairless.
The argument begins by noting that for regular spacetimes such as that of a NS spacetime, there is no interior boundary or ``holes'' in the spacetime.
For the particular case of sGB, the right-hand side of Eq.~\eqref{eq:Scalar-Hair-NS-covariant} vanishes by the generalised Chern-Gauss-Bonnet theorem because the spacetime is connected~\cite{2016-Yagi-Stein-Yunes}.
This means that $\mu_{\text{NS}}$ must be zero.
This argument can also be generalised to NS spacetimes in dCS gravity.
Because $\mu_{\text{NS}}$ is zero, $\alpha=1$ from Eq.~\eqref{eq:NS-Hair-sGB} for a NS spacetime.
On the other hand, for a BH spacetime, regularity at the horizon means that we must set $\alpha = 0$.
This means that,
\begin{align}\label{eq:Sols-Scalar-Field-Static}
    \Phi_{\rm NS} &= \epsilon\left(\frac{2}{Mr} + \frac{2}{r^2} + \frac{8M}{3r^3}\right) + \frac{\epsilon}{M^2}{\log \left(1 - \frac{2M}{r} \right)}  + {\cal{O}}(\epsilon^2)\,, 
    \\
    \Phi_{\rm BH} &= \epsilon\left(\frac{2}{Mr} + \frac{2}{r^2} + \frac{8M}{3r^3}\right) + {\cal{O}}(\epsilon^2)\,.
\end{align}
Therefore, the presence of the homogeneous solution is the difference between the scalar field solution in a NS spacetime versus a BH spacetime.

From the theorems established in Sec.~\ref{sec:Kay-Wald}, the homogeneous mode diverges at the EH, and thus, it must be radiated away during collapse.
To see how this happens, let us study the evolution of the scalar field in the OS spacetime with the hairless solution as initial data.
In the OS spacetime, the star crosses its EH at $t = -4M/3$ (see Fig.~\ref{fig:OS-Collapse}), and after this time we cannot support the hairless initial data because the field diverges at $r=2M$.
Hence, studying the evolution of the hairless scalar field just before the star collapses to a BH will help us answer why and how the scalar hair is radiated away.

In passing, let us note that in the OS spacetime, the curvature scalars exhibit a discontinuity at the surface of the star. This is because the density inside the star is constant and it drops to zero outside the star.
Therefore, the solution for the scalar field will not be continuous across the surface of the star. Nevertheless, one can still study the dynamics of the scalar field inside the star as it collapses and this will give us valuable insights into the dynamics of the scalar field during gravitational collapse.

Let us then prescribe the hairless scalar field as initial data on a null characteristic outgoing surface $u=u_0$, as shown in Fig.~\ref{fig:OS-Collapse} (dashed red line).
For the initial data in the interior of the star, we choose an arbitrary, smooth and bounded profile  $\sigma_2(v)$, which matches smoothly to the exterior hairless data at the surface. 
For the initial data on the other characteristic $v=v_0$ (dotted green line in Fig.~\ref{fig:OS-Collapse}), we choose an arbitrary smooth and bounded profile $\sigma_1$, which matches continuously to the $\sigma_2$ profile at the point of intersection of the two characteristic surfaces.
The initial data is then
\begin{widetext}
\begin{align}\label{eq:Inidata}
    \Phi_{\text{initial}}(u,v_0) &= \sigma_1(u)\\
    \Phi_{\text{initial}}(u_0,v) &= \begin{cases}
    \epsilon\left(\frac{2}{Mr} + \frac{2}{r^2} + \frac{8M}{3r^3}\right) + \frac{\epsilon}{M^2}\log \left(1 - \frac{2M}{r} \right)\,, & r>R(t(u_0,v)) \\
    \sigma_{2}(v)\,, & r<R(t(u_0,v))
    \end{cases}
\end{align}
\end{widetext}
In the above equation, the radial coordinate $r$ is to be read as a function of $u$ and $v$ evaluated at $v = v_{0}$, i.e.~$r(u,v_0)$. 
The transformation relating $(t,r)$ to $(u,v)$ is given in Sec.~\ref{sec:OS-Exact-Sol} and Appendix~\ref{Appendix:Coordinate-Transformations}.
The initial data described above determines the evolution of the scalar field uniquely to the future of the characteristic surface and up to the EH, which is shown as a dash-dotted blue line in Fig.~\ref{fig:OS-Collapse}.
The future of the characteristic lines up to the event horizon is shown as a shaded region in Fig.~\ref{fig:OS-Collapse}.

\begin{figure}[t!]
    \centering
    \includegraphics[width=1.0\columnwidth]{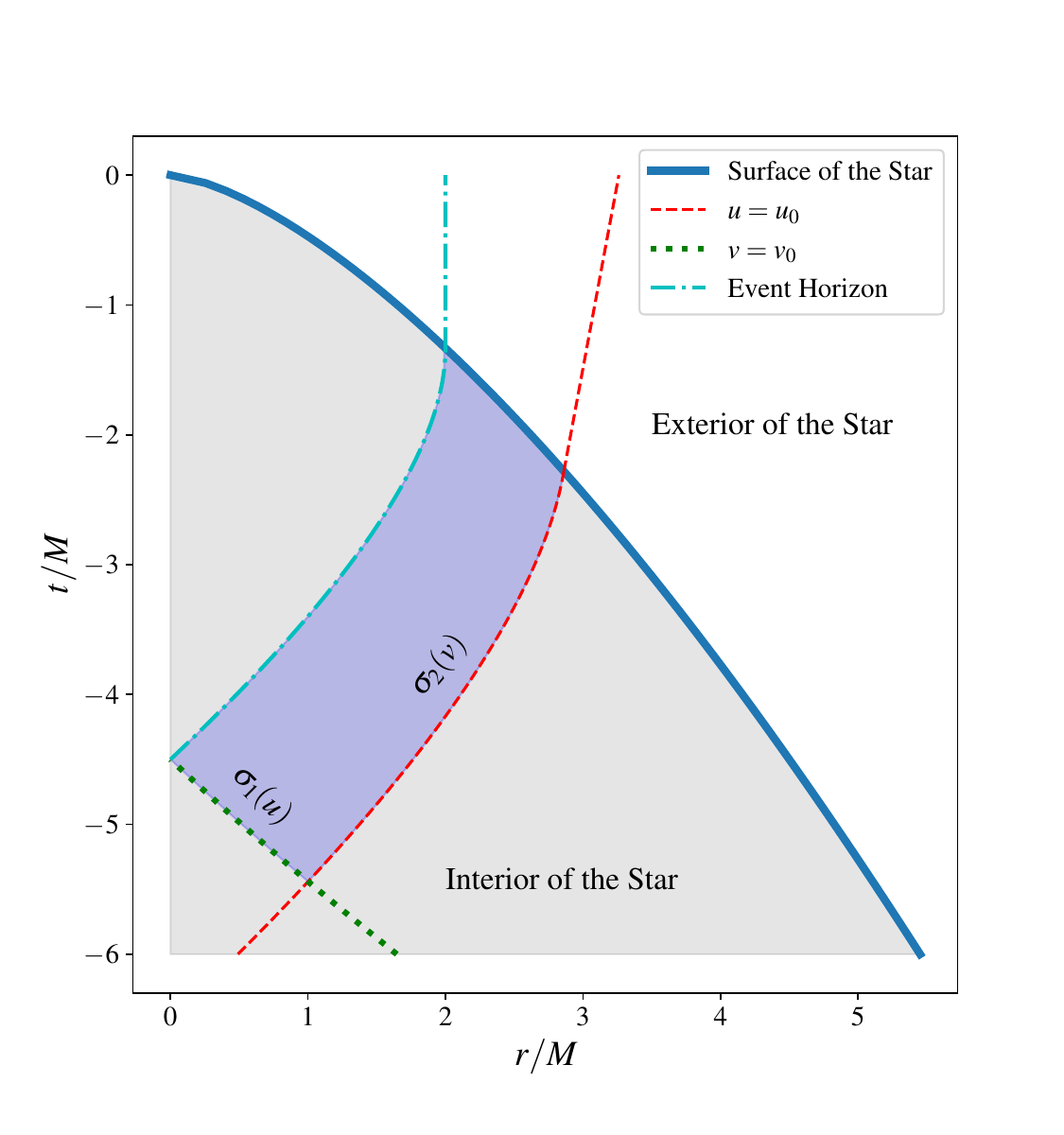}
    \caption{The surface of the star, EH and the null surfaces on which we prescribe initial data is shown in OS collapse. The shaded region in the above figure shows the spacetime region to the future of the null surfaces $u=u_0$ and $v=v_0$ up to the EH and inside the star. The EH forms at $t=-9/2M$ and the surface of the star crosses the EH at $t=-4/3M$, after which the star is causally disconnected from external observers.}
    \label{fig:OS-Collapse}
\end{figure}
We prescribe data on characteristic surfaces because this will enable us to find an exact solution to the scalar field equation that matches the initial data (see Sec.~\ref{sec:OS-Exact-Sol}). 

But first, we need to find the exact solution for Eq.~(\ref{eq:sGB-OS}) inside the star. 
The most general solution is given by the sum of a particular and a homogeneous solution.
To find the particular solution, we start by writing down the Gauss-Bonnet scalar in the OS spacetime, namely
\begin{align}\label{eq:GBvalues}
    \GB &=\begin{cases}
    \frac{48M^2}{r^6} &, \,  r\geq R(t)\,,\\
    -\frac{64}{27 t^4} & , \, r<R(t)\,.
    \end{cases}
\end{align}
The Gauss-Bonnet scalar outside the star is only a function of the radial coordinate, while inside the star it is only a function of the time coordinate.
Therefore, our ansatz for the particular solution will respect this behavior.
Denoting the particular solution inside the star by $\Phi_1^{-}(t)$ and outside the star by $\Phi_1^{+}(r)$, the solution is then
\begin{align}
    \Phi_1^{-}(t) &= -\frac{32 \epsilon }{27 t^2} ,\\
    \Phi_1^{+}(r) &= \epsilon\left(\frac{2}{Mr} + \frac{2}{r^2} + \frac{8M}{3r^3}\right).
\end{align}
The general solution to Eq.~(\ref{eq:sGB-OS}) is then 
\begin{align}
    \Phi^{-}(t,r) &= \Phi_0^{-}(t,r) + \Phi_{1}^{-}(t) ,\nonumber \\
    \Phi^{+}(t,r) &= \Phi_0^{+}(t,r) + \Phi_{1}^{+}(r) ,
\end{align}
where, $\Phi^{\pm}$ is the exterior and interior scalar fields, while $\Phi_0^{\pm}$ satisfies the wave equation
\begin{equation}\label{eq:Wave-Only}
    \Box \Phi^{\pm}_0 = 0.
\end{equation}
To complete the general solution, we therefore must find the homogeneous solution, which we do next.
\subsubsection{Exact Solution for the $\Phi^-$}\label{sec:OS-Exact-Sol}

Let us first focus on the homogeneous solution and then the general solution for the scalar modes inside the star.
We begin by performing a coordinate transformation from PG to double null coordinates, $x^{\mu} = (u,v,\theta,\phi)$. 
The transformation from FLRW coordinates, $x^{\mu}_{\text{FLRW}} = (t,r_c,\theta,\phi)$ to double null coordinates is
\begin{align}
    u &= -3\left(-\frac{t}{M}\right)^{1/3} -r_c \nonumber,\\ 
    v &= -3\left(-\frac{t}{M}\right)^{1/3}+ r_c.
\end{align}
The transformation from FLRW to PG coordinates is given in Eq.~(\ref{eq:PG-FLRW}).
The metric in double null coordinates takes the form
\begin{equation}
    ds^2 = -a^2(t) \, du \, dv + r_c^2\, a^2(t) \,d\Omega^2 .
\end{equation}

With this transformation at hand, we define $\Psi_0^{-} = \Phi_0^- r_c$ so that the wave equation in double null coordinates becomes
\begin{align}
\frac{(u+v)}{2} \partial_{u}\partial_v\Psi_0^{-} + \partial_u\Psi_{0}^- + \partial_v \Psi_{0}^- = 0\,,
\end{align}
whose solution is
\begin{align}
&\implies \Phi_0^- = \frac{1}{r_c}\left[\frac{f'(u)+ g'(v)}{2(u+v)^2} -\frac{f(u)+g(v)}{(u+v)^3} \right].    
\end{align}
This is the most general, spherically-symmetric solution to the wave equation inside the star.
The solution is parameterised by two functions, $f(u)$ and $g(v)$, which characterise the outgoing and the ingoing modes of the scalar field $\Phi_0^-$ respectively.

We now match the above solution with the initial data given in Eq.~(\ref{eq:Inidata}).
To keep mathematical expressions short, we introduce the shorthands
\begin{align}\label{eq:Ini-data-shifted}
    \Tilde{\sigma}_1(u) &= \sigma_1(u) + \frac{32\epsilon}{27\,t^2(u,v_0)} ,\nonumber \\
    \Tilde{\sigma}_2(v) &= \sigma_2(v) + \frac{32\epsilon}{27\,t^2(u,v_0)}, 
\end{align}
where, 
\begin{equation}\label{eq:t-u-v}
    t(u,v) = M\left(\frac{u+v}{6} \right)^3.
\end{equation}
In defining $\Tilde{\sigma}_{1,2}$ we are effectively subtracting the particular solution $\Phi^-_1(t)$ from the initial data in Eq.~(\ref{eq:Inidata}) inside the star.
We can now write down the full solution in the interior of the star,
\begin{widetext}
\begin{align}\label{eq:Sol-Full-Interior}
   \Phi^{-}(u,v) &=\frac{1}{v-u}\left\{\frac{2 J_2 \left(u-u_0\right) \left(u_0+v\right)}{(u+v)^3}+\frac{2 J_1 \left(v-v_0\right) \left(u+v_0\right)}{(u+v)^3}+\frac{\left(v-u_0\right) \left(u_0+v\right){}^2 \Tilde{\sigma}_2(v)}{(u+v)^2}+ \right. \nonumber\\
    & \left. \frac{\left(u_0^2-v_0^2\right) \left[2 u v+v_0 (v-u)+u_0 \left(u-v+2 v_0\right)\right] \Tilde{\sigma _2}\left(v_0\right)}{(u+v)^3}-\frac{\Tilde{\sigma} _1(u) \left(u-v_0\right) \left(u+v_0\right){}^2}{(u+v)^2} \right\} \nonumber \\
    &- \frac{55296\epsilon^2}{M^2(u+v)^6} ,
    \end{align}
\end{widetext}
where,
\begin{align}\label{eq:J-Integrals}
    J_1 &= \int_{u_0}^{u} (v_0 - u') \Tilde{\sigma}_{1}(u') du' \nonumber,\\
    J_2 &= \int_{v_0}^{v} (v' - u_0) \Tilde{\sigma}_2(v') dv'.
\end{align}

Let us now analyze the solution more carefully.
We know that $\sigma_{1,2}$ are bounded functions.
Therefore, from Eq.~(\ref{eq:Ini-data-shifted}) we see that $\Tilde{\sigma}_{1,2}$ are also bounded because $t\neq 0 $.
The full solution in the interior of the star is constructed from $\Tilde{\sigma}_{1,2}$ and their integrals $J_{1,2}$, defined in Eq.~(\ref{eq:J-Integrals}).
Therefore, from Eq.~(\ref{eq:Sol-Full-Interior}), the full solution is also bounded as long as $u+v \neq 0$.
From Eq.~(\ref{eq:t-u-v}), when $u+v$ goes to zero,  we see that $t$ goes to zero, and this is when the curvature singularity forms (see Fig.~\ref{fig:OS-Collapse}).
We then conclude that the solution is bounded to the future of the null surfaces $u=u_0$ and $v=v_0$ up until the EH.
By continuity, the solution is also bounded on the EH.
The future the of null surfaces $u=u_0$ and $v=v_0$ includes the point where the surface of the star crosses its EH and, therefore, the solution is also bounded as the star crosses its EH.

To summarize,
\begin{itemize}
\setlength\itemsep{0.05cm}
    \item The solution in the interior of the star is determined by initial data on characteristic surfaces which are present \textit{before} the EH forms. The spacetime up to this point in the collapse process is regular, and the profiles $\sigma_{1,2}$ are bounded.
    \item Because $\sigma_{1,2}$ are bounded, Eq.~(\ref{eq:Sol-Full-Interior}) tells us that the solution for the scalar field does not diverge to the future of the characteristic surfaces $u = u_0$ and $v=v_0$. In particular, the solution on the surface of the star cannot diverge as the surface of the star crosses its EH.
\end{itemize}
Therefore, the only modes that can escape the star as the star crosses its EH are those that formed \emph{before} the EH forms.
The boundedness of these modes prevent the scalar field from diverging.

\subsubsection{Solution for $\Phi^+$}

Let us now focus on the homogeneous solution and then the general solution for the scalar modes outside the star, so that we can study how the regular modes propagate from the center of the star out to external observers.
The field $\Phi_0^+$ obeys the wave equation in the Schwarzschild spacetime, so defining $\Psi_0^+ = r\Phi_0^+$ and working in Schwarzschild null coordinates, $x^{\mu}_{\text{out}} = \left(u_s,v_s,\theta,\phi \right)$ (different from the null coordinates introduced in the interior), the wave equation becomes
\begin{equation}
    \partial_{u_s}\partial_{v_s} \Psi^+_0 + \frac{M}{2r^3}\left(1-\frac{2M}{r}\right)\Psi^+_0 = 0\,,
\end{equation}
where the effective potential 
\begin{equation}\label{eq:Effective-Potential}
    V(r) = \frac{M}{2r^3}\left(1-\frac{2M}{r}\right),
\end{equation}
is strongly peaked at $r = 8M/3$. 
Therefore, close to the spacetime event when the star goes inside its EH, the above equation can be approximated very well by~\cite{Price-1972,Gundlach:1993tp},
\begin{align}\label{eq:Exterior-Sol-hom}
    \partial_{u_s}\partial_{v_{s}} \Psi^+_0 = 0\,.
\end{align}
The solution to the above equation consists of an ingoing mode and an outgoing mode.

We can now understand how the bounded scalar modes from the interior propagate out and settle to a regular and stationary scalar field configuration.
The solution in the exterior region is given by,
\begin{equation}
    \Phi^{+}(t,r) = \frac{\Psi^{+}_0}{r} + \epsilon\left(\frac{2}{Mr} + \frac{2}{r^2} + \frac{8M}{3r^3}\right).
\end{equation}
The wave modes in $\Psi^{+}_0$ propagate from the surface of the star and reach the peak of the effective potential given in Eq.~\eqref{eq:Effective-Potential}.
Modes with sufficient energy penetrate the potential barrier and propagate out to $\mathcal{I}^+$, the lower energy modes back scatter from the potential.
The energy in the modes that propagate to $\mathcal{I}^+$ decays away as a power law by Price's law.
The modes that back scatter and propagate to $\mathcal{H}^+$ also decay away as a power law~\cite{Gundlach:1993tp}.
Hence, $\Psi^{+}_0$ decays to zero after sufficiently long time. 
This means that,
\begin{equation}\label{eq:OS_final_sol}
    \Phi_{\text{final}} = \epsilon\left(\frac{2}{Mr} + \frac{2}{r^2} + \frac{8M}{3r^3}\right)\,,
\end{equation}
which is a hairy solution.
 
\subsection{Collapsing Neutron Star Spacetime}
\begin{figure*}
    \centering
    \includegraphics[width=0.45\textwidth]{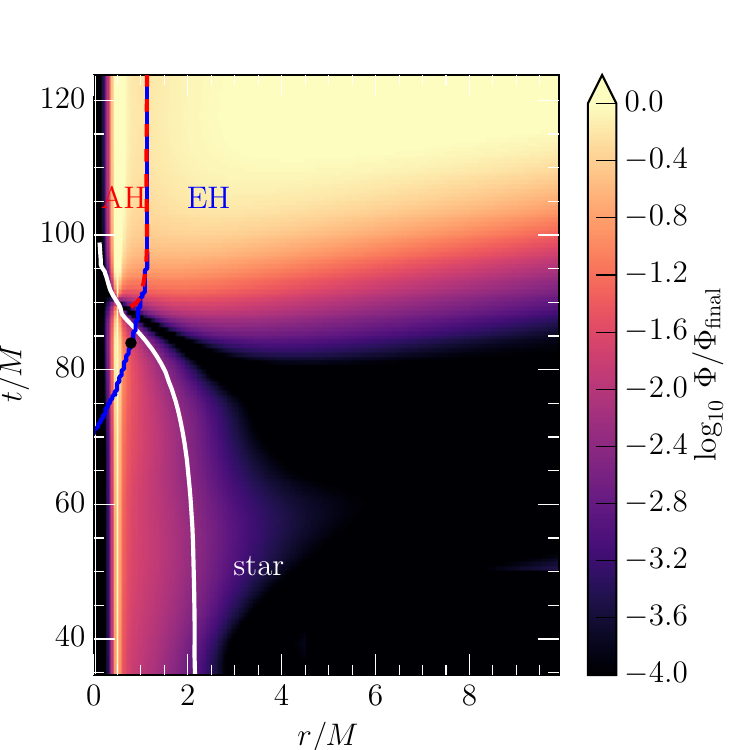}
\includegraphics[width=0.45\textwidth]{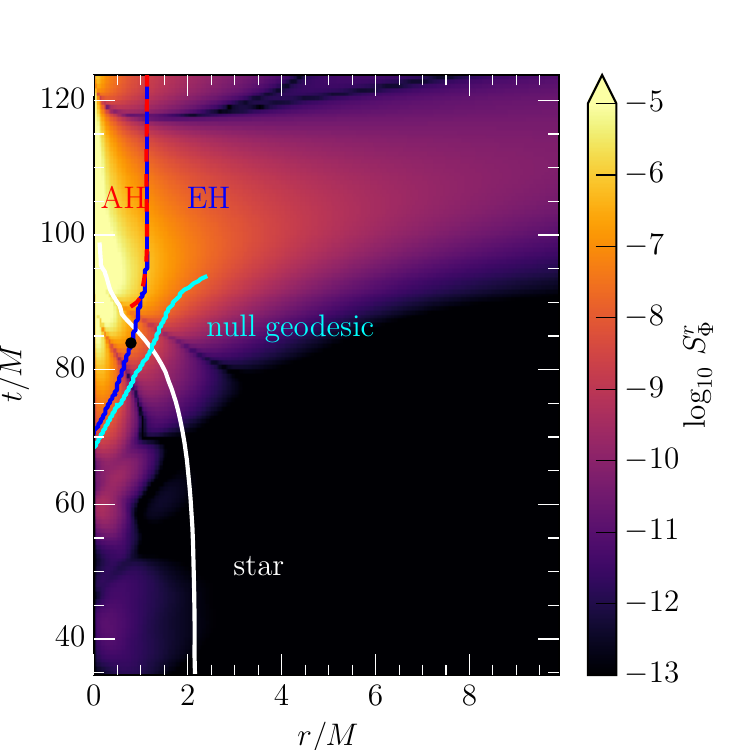}
    \caption{Simulated space-time diagram of a collapsing NS in perturbative sGB gravity. Shown are the surface (white), the event horizon (EH, blue) and the apparent horizon (AH, red). The light blue line refers to null geodesic, and the black dot to the time of peak flux. The time $t$ and radii $r$ shown both refer to individual coordinates. The panels show, {\it (Left) } the value of the scalar field $\Phi$ relative to its final configuration $\Phi_{\rm final}$ [Eq.~\eqref{eq:OS_final_sol}] and {\it (Right) } the scalar energy flux $S^r_\Phi$.
    }
    \label{fig:numerical_collapse}
\end{figure*}
Let us now compare the analytical results obtained in the previous subsection to the behavior of the scalar field in a fully dynamical, and fully relativistic, collapsing NS background. More specifically, we solve the decoupled sGB field equations [Eq.~\eqref{eq:EOM-Decoupled}] in the background of a collapsing NS, which is obtained by solving the Einstein equations and the the conservation of energy-momentum equation in the matter sector. All of this is done through numerical relativity techniques. That is, we foliate the four-dimensional manifold $(M,g)$ into three-dimensional spatial hypersurfaces $(\Sigma_{t},\gamma)$, labelled by a time parameter $t$ and with an induced metric $\gamma_{ij}$.
By applying this spacetime decomposition to the field equations [Eq.~\eqref{eq:EOM-Decoupled}], we obtain hyperbolic (time-evolution) equations and elliptic equations, the latter of which we solve for the initial configurations.

The collapsing NS background is constructed as follows. First, we prescribe initial data as a (spherically-symmetric) solution to the Einstein equations. Assuming stationarity and spherical symmetry, they
reduce to the Tolman-Oppenheimer-Volkoff (TOV) equations \cite{1939PhRv...55..374O}
\begin{equation}
    \frac{{\rm d}P}{{\rm d} r} = - \left( e + P \right) \frac{m + 4 \pi r^3 P}{\left(1 - \frac{2m}{r}\right)}\,,
\end{equation}
\begin{equation}
    \frac{{\rm d}m}{{\rm d} r} = 4 \pi r^2 e\,,
\end{equation}
for which the metric is given by
\begin{equation}
    {\rm d}s^2 = - e^{2 \alpha} {\rm d} t^2 + \frac{1}{1 - \frac{2 m }{r}}  {\rm d} r ^2 + r^2 {\rm d} \Omega^2\,,
\end{equation}
\begin{equation}
    \frac{{\rm d}\alpha}{{\rm d} r} = \frac{m + 4\pi r^3 P}{r^2 \left(1 - \frac{2m}{r}\right)}\,.
\end{equation}
Here $e$ and $P$ are the energy density and pressure inside the NS, respectively. The mass $m$, denotes the mass contained within a radius $r$ from the center of the star.
In our numerical simulations we  conformally decompose the spatial part of the metric such that
\begin{equation}
    {\rm d}s^2 = - e^{2 \alpha} {\rm d} t^2 + \psi^4  \left[{\rm d} r ^2 + r^2 {\rm d} \Omega^2\right]\,,
\end{equation}
where $\psi^4$ is the conformal factor.
The NS matter sector is modeled through the H4 equation of state 
\cite{PhysRevLett.67.2414}, which is constructed from relativistic mean field theory, it includes hyperons in addition to neutrons, protons, electrons and muons, and it makes specific choices for the values of the incompressibility, the effective mass and the nucleon-meson coupling. We initialize the central density so that the solution to the TOV equations lead to a NS with mass $M = 2.02\, M_\odot$.
This data is unstable to gravitational collapse because the density inside the NS exceeds that of the maximum mass configurations with an H4 equation of state ($M_{\rm TOV, max H4} = 2.03\, M_\odot$) \footnote{We point out that the dynamically unstable branch of equilibrium NS solutions will have masses lower than the maximum mass, but reaches energy densities higher than dynamically stable NS.}
We do not add any additional perturbations, and instead, the collapse is triggered solely by the initial interpolation error from interpolating the spherical NS data onto the Cartesian grid.

Given this initial data, we then numerically solve the equations of general-relativistic magnetohydrodynamics \cite{Duez:2005sf} (in the absence of magnetic fields) and the Einstein equations to build the dynamical NS spacetime. 
%We do so in
We employ the Z4c formulation \citep{2010PhRvD..81h4003B,2013PhRvD..88h4057H}
together with the moving puncture gauge
using the \texttt{Frankfurt/IllinoisGRMHD} (\texttt{FIL}) code \cite{Most:2019kfe,Etienne:2015cea}.
\texttt{FIL} has been used extensively to study
BNS~\cite{Most:2018eaw,Most:2019onn,Most:2019pac,Papenfort:2021hod,Most:2021ktk} and NS-BH mergers \cite{Most:2019kfe,Most:2021ytn,Most:2020exl}, including effects of finite-temperatures, weak interactions and magnetic fields.
\texttt{FIL} implements highly-accurate numerical methods \cite{DelZanna:2007pk}
and has been demonstrated to achieve beyond second-order convergence of the numerical solution in simulations of NS space-times \cite{Most:2019kfe} .

This
solution provides a numerically-generated spacetime background on which we can solve the scalar field evolution equation in sGB gravity [Eq.~\eqref{eq:sGB-OS}].
We evolve Eq.~\eqref{eq:sGB-OS} as a time evolution problem 
using \canuda~\cite{witek_helvi_2021_5520862}, an open-source software for numerical relativity beyond GR~\cite{Benkel-2016,Benkel_2017,Witek:2018dmd,Silva:2020omi} and to simulate BHs interacting with ultra-light fields~\cite{Okawa:2014nda,Zilhao:2015tya};
see Refs.~\cite{Benkel-2016,Witek:2018dmd} for details on the implementation of the sGB equations.
{\canuda} is compatible with the {\texttt{Einstein Toolkit}} software for computational astrophysics
\cite{steven_r_brandt_2021_5770803,Loffler:2011ay,Zilhao:2013hia}.

The numerical domain is constructed as a series of $5$ nested boxes with an outer domain size of $128\, M$ and a finest resolution of $0.05\, M$. Note, that we work in geometric units, $c=1=G$ and measure lengths and evolution time of the numerical simulation in units of the mass $M$, with $M=2.02M_{\odot}$. 
We initialize the scalar field to zero, but note that given the small initial perturbation, 
the star does not immediately collapse, so that 
the numerical solution of the scalar field settles to a stationary configuration for a TOV space-time before the NS begins to contract.

During the numerical evolution and during post-processing we compute a number of quantities that aid us in understanding the behaviour of the scalar hair.
One of them is the apparent horizon, which we calculate using the \texttt{AHFinderDirect} code \cite{Thornburg:2003sf} on every time slice. Another one is the event horizon, which we calculate using the approach described in~\cite{Diener:2003jc}, by tracing geodesics back in time post evolution of the NS background. The last quantity we need is a spacetime diagram of the scalar field and associated quantities. We construct this diagram by focusing only on one of the coordinate axis, because, although our simulations are performed in three spatial dimensions, the spherical symmetry of all solutions is well preserved during collapse.

The resulting evolution is shown in Fig. \ref{fig:numerical_collapse} where we present the relative scalar field amplitude (left panel) and energy flux (right panel). 
We mark the surface of the star as a contour enclosing $99\%$ of the baryon mass,
\begin{equation}
    M_b = \int {\rm d}V \sqrt{\gamma} \rho_B \Gamma\,,
\end{equation}
where $\rho_B$ is the baryon density and $\Gamma$ the Lorentz factor of the NS material.
After an initial transient (not shown), we can see that a stationary scalar field solution inside the star develops. As expected from Eq. \eqref{eq:OS_final_sol}, 
the exterior of the star does not contain a $r^{-1}$ contribution at initial times (black area).

Around $t \approx 70\, M$ the event horizon is first 
found at the center of the star, and gradually expands outwards. An apparent horizon is not found until all the matter has crossed the event horizon. This happens around $t \approx 90\, M$. We can see that around the same time the scalar hair begins to grow (yellow region). The singularity does not begin to form until $t \approx 100\, M$. This numerically confirms that the growth of scalar hair is not associated with the formation of a singularity, or with the formation of the apparent horizon, but rather with essentially the NS surface disappearing behind the event horizon.

In order to better understand at which point the scalar field begins to grow, we also calculate the amount of radial energy flux contained in the scalar field. We do this by computing the analogue of the Poynting flux in electromagnetism, namely
\begin{align}
    S^r_\Phi = \partial^t \Phi \partial_r \Phi + \frac{1}{2}g^t_r \partial_\mu \Phi \partial^\mu \Phi.
\end{align}
Since a change in the scalar field will
yield an associated
%inevitable require an associate 
energy flux, we can use this quantity to track the evolution of the scalar field. The resulting evolution is shown in the right panel of Fig. \ref{fig:numerical_collapse}. Since the dynamical coordinate evolution in the simulation leads to a tilting of light cones close to the center of the star, we also show an out-going null geodesic (light blue curve), in order to show when changes inside the star propagate to the surface. We can see that the initial change in the scalar field, measured in terms of an out-going energy flux, is caused by the contraction in the center of the star. However, the main growth phase of the scalar field does not happen until the event horizon has formed. This can most easily be seen by the maximum (out-going) scalar flux on the horizon, as marked by a black dot in Fig. \ref{fig:numerical_collapse}.  We emphasize, that although we are numerically evolving the interior of the BH, everything within the event horizon (blue curve) is causally disconnected from the exterior solution.
However, we can also see that this initial flux does not reach the surface until some time later, around $t \approx 80\, M$.
At this time, the star has not yet completely collapsed to a BH, and the $r^{-1}$ term of the scalar field has not yet grown, as is revealed by a comparison with the left panel. In terms of the energy flux, no apparent transition can be seen later on. The peak of the out-going energy flux (black dot) is reached only when the entire star is about to be swallowed.  
%%%%%%%%%%%%%%%%%%%%%%%%%%%%%%%%%%%%%%%%%%%%%%%%%%%%%%%%%%%%%%%%%%%%%%%%%%%%%%%%%%%%%%%%%%%%%%%%%%%%%%%%%%%%%
\section{Conclusions and Future Directions}\label{sec:Conclusions}

We have investigated the dynamics of scalar fields during spherically-symmetric gravitational collapse in a wide class of modified gravity theories.
We proved that BH solutions in these theories generically fall into two different classes, depending on whether the scalar field they support is regular at the EH of the BH or not.
We also proved that if the scalar field is regular at the EH, then the surface gravity of the EH is non-zero.
Physically, one expects that the scalar field will settle to a regular solution during gravitational collapse.
This flow to regularity was investigated numerically in sGB gravity by simulating the sGB scalar in the background of the OS collapse~\cite{Benkel-2016,Benkel_2017}.
We confirm and greatly extend these results by proving that, during spherically symmetric collapse, the scalar field eventually settles to a regular solution to first order in perturbation theory and in a wide class of modified gravity theories where a scalar couples to a curvature invariant.
As a corollary, we also show how this flow to regularity affects the scalar hair in the initial NS state and the final BH state.

To understand the physical reasons behind these results, we investigated the dynamics of the scalar field evolution in two background spacetimes and in sGB theory as an example.
In this theory, the scalar hair goes from zero for a NS to a non-zero value for a BH during gravitational collapse.
We verified that this happens by finding an exact analytical solution for the scalar field in the interior of the star during OS collapse.
Using this solution, we showed that the regularity of wave modes before the EH forms ensures that the scalar field remains regular during collapse. After the star collapses to a BH, the wave modes decay away by Price's law~\cite{Gundlach:1993tp,Price-1972}, leading to a regular static solution to the sGB equations that is hairy. We confirm these results in the more realistic scenario by simulating the collapse of a non-rotating NS in full general-relativity. On top of this dynamical background we solved the decoupled sGB equations to study the evolution of the scalar field. In line with our analytical results, we find that escaping modes of the scalar field form inside the star but outside of the growing EH. We further confirm that the formation of scalar hair roughly coincides with the bulk of the matter disappearing behind the EH, generalizing the results from OS to realistic NS models.

Our work points to a few natural directions to pursue in the future.
One natural direction to explore is the dynamics of the scalar field in axisymmetric collapse and thereby extending studies of the scalar field's formation around a Kerr background~\cite{Witek:2018dmd}.
The results we established in spherically-symmetric collapse should be extendable to the axisymmetric case, and this will help us analyze theories such as dCS gravity which reduce identically to GR in spherical symmetry.
A second direction for future work is to extend our results to higher orders in perturbation theory by including the back reaction of the scalar field onto the metric. 
This would be complementary to (and extending) previous simulations of scalar field collapse in the full theory~\cite{Ripley-Pretorius-2019} and studies of the metric perturbations at second order in perturbation theory~\cite{Okounkova:2019zep,Okounkova:2020rqw}.
Since the class of theories we consider in Eq.~(\ref{eq:Action}) reduce to GR when the coupling constant goes to zero, the results we have established here should go through, with small second-order modifications. 
Another direction for future work includes studying the behaviour of dipole and higher order moments of the scalar field during gravitational collapse.
A final natural direction to pursue is to investigate whether our results can be extended to a wider class of theories that exhibit other non-linear deviations from GR in the strong coupling regime.
Examples of these include certain scalar tensor theories~\cite{1996-Damour} and theories where the coupling function scales at least as $\Phi^2$ instead of the linear coupling we considered in Eq.~(\ref{eq:Action}).
These theories are known to exhibit phenomena such as spontaneous scalarization,
which are intrinsically non-perturbative~\cite{1996-Damour,2018-Hector-Spontaneous,2018-Daniela}.
Understanding the behavior of scalar hair in these theories will help us determine the best routes to constrain them.

\begin{acknowledgements}
The authors thank Professors Robert Wald, Piotr Chru\'sciel and Frans Pretorius for helpful discussions and comments. AH and NY acknowledge support from the Simons Foundation through Award number 896696.
ERM acknowledges support from postdoctoral fellowships at the Princeton Center for Theoretical Science, the Princeton Gravity Initiative, and the Institute for Advanced Study.
JN is partially supported by the U.S.
Department of Energy, Office of Science, Office for Nuclear Physics under Award No. DE-SC0021301.
HW acknowledges financial support provided by NSF grants No. OAC-2004879 and No. PHY-2110416, and Royal Society (UK) Research Grant RGF\textbackslash R1\textbackslash180073.

This work used the Blue Waters sustained-petascale computing project
which is supported by National Science Foundation awards No. OCI-0725070
and No. ACI-1238993, the State of Illinois and the National Geospatial Intelligence Agency.
Blue Waters is a joint effort of the University of Illinois at Urbana-Champaign and its National Center for Supercomputing Applications.
This work used the Extreme Science and Engineering Discovery Environment (XSEDE), which is supported by National Science Foundation grant number ACI-1548562. The authors acknowledge the Texas Advanced Computing Center (TACC) at The University of Texas at Austin for providing HPC resources that have contributed to the research results reported within this paper, under LRAC grants AT21006.
Part of the simulations presented in this article were performed on computational resources managed and supported by Princeton Research Computing, a consortium of groups including the Princeton Institute for Computational Science and Engineering (PICSciE) and the Office of Information Technology's High Performance Computing Center and Visualization Laboratory at Princeton University.
\end{acknowledgements}

\appendix
\section{Proof of Connection Between Non-Vanishing of Surface Gravity and Regularity of the Scalar Field} \label{Appendix:Proof}
In this Appendix, we prove that if the scalar field is regular on the EH then surface gravity of the EH must be non-zero. 
As noted in Sec.~\ref{sec:Case-1-Scalar-Field}, if $\mathcal{F}(0) \neq 0$, then regularity at the EH implies that $p=0$ and the surface gravity is finite. 
If $\mathcal{F}(0) = 0$, then we have to look at higher order terms. 
We show that looking at higher order terms leads to a contraction if $p\geq 1$, i.e., if the surface gravity vanishes.
We prove that this is true for two general choices for the curvature scalar $\mathcal{F}$.
For the first case, we consider modified quadratic theories of gravity and in the second case we consider general $\mathcal{F}(R)$ theories.
The idea behind the proof uses the fact that if the cross section of EH is a 2-sphere then its Gaussian curvature given by $K(0)^{-1}$ cannot be zero. This idea is inspired by Ref.~\cite{Chrusciel-Reall-Tod-2005}, where a similar argument was used to show that there can be no static BH solutions in GR with a degenerate EH.

\subsection{Modified Quadratic Gravity}
In case of modified quadratic theories of gravity, the curvature scalar $\mathcal{F}$ takes the form,
\begin{equation}\label{eq:Modified-Grav-F}
    \mathcal{F} = \alpha_1 R^2 + \alpha_2 R_{\mu\nu}^2 + \alpha_3 R_{\alpha\beta\gamma\delta}^2,
\end{equation}
the field equations for this theory are obtained by replacing,  
\begin{align}
    \alpha_i \to \epsilon\, \alpha_i ,\\
    \beta = 1,\\
    \alpha_4 = 0.
\end{align}
in Eqs.\ (27-28) in Ref.~\cite{Yunes_2013}.
We now have two cases to deal with:

\vspace{5mm}
\textbf{Case 1: $\mathcal{F}(0) = 0$, $\mathcal{F}'(0) \neq 0$}

\vspace{2mm}
In this case, the local expansion of Eq.(\ref{eq:Scalar-Field-Dr}) in Gaussian null coordinates takes the form,
\begin{equation}
    \partial_r \Phi = -\frac{\epsilon\,\mathcal{F}'(0)}{2D_0(0)}r^{1-p} + O(r^{2-p}),
\end{equation}
regularity requires that, $p \in \{0,1\}$ since we assume that $\mathcal{F}'(0) \neq 0$. 
If $p = 0$, then surface gravity is finite and we are done. Now suppose $p = 1$. Then we need to use the gravitational equations.
Let us denote the gravitational equations by,
\begin{align}
    E_{\mu\nu} &:=  G_{\mu\nu} + 16\pi \,\epsilon\, \mathcal{C}_{{\mu\nu}} - 8\pi T_{\mu\nu}^{\Phi} =0, \\
    E &:= g^{\mu\nu}E_{\mu\nu} = 0.
\end{align}
Taking the trace of $E_{\mu\nu}$ and evaluating it at $r=0$, we get,
\begin{equation}\label{eq:Trace-loc-Mod-Quad-Grav}
    E(0) = 2 D_0(0) - \frac{2}{K(0)} = 0 \implies D_0(0) = \frac{1}{K(0)}.
\end{equation}
We can now look at the value of $E_{vr}$ at $r=0$ to get a contradiction.
\begin{dmath}
   E_{vr} = -\frac{16 \pi\,\epsilon\,  \Phi (0)  [2 \alpha_1+\alpha_2+2 \alpha_3]}{K(0)^2}-\frac{1}{K(0)}+16 \,\pi \,\epsilon\,  D_0(0){}^2\,\Phi (0) \,  [2 \alpha_1+\alpha_2+2 \alpha_3] = 0,
\end{dmath}
from Eq.\ (\ref{eq:Trace-loc-Mod-Quad-Grav}) we know that $D_0(0) = K(0)^{-1}$. Using this in the above equation we get,
\begin{equation}
    \frac{1}{K(0)} = 0,
\end{equation}
this is a contradiction because $K(0)^{-1}$ is the Gaussian curvature of the 2-sphere at the horizon, which cannot be zero by assumption (3).

\vspace{5mm}
\textbf{Case 2: $\mathcal{F}^{(m)}(0) =0 , \mathcal{F}^{(m+1)}(0) \neq 0$}

\vspace{2mm}
Using the definition
\begin{equation}
    \mathcal{F}^{(m+1)}(0) := \left. \frac{\partial^{m+1} \mathcal{F}}{\partial r^{m+1}} \right|_{r=0},
\end{equation}
the local expansion of Eq.\ (\ref{eq:Scalar-Field-Dr}) takes the form,
\begin{equation}
    \partial_r \Phi = -\frac{\epsilon\,\mathcal{F}^{(m+1)}(0)}{2\,D_0(0)}r^{m+1-p} + O(r^{m+2-p}).
\end{equation}
Therefore, regularity requires that $p\in \{0,1,..,m+1\}$ from Eq.\ (\ref{eq:Surface-Gravity}). 
If $p=0$ then we see that surface gravity is nonzero. If $p=1$, we already showed a contradiction in Case 1 above. 
Now suppose $p \geq 2$. The value of $E$ at the EH now takes the form,
\begin{equation}
   E(0) = -\frac{2}{K(0)} = 0,
\end{equation}
which gives us a contradiction as before.
\subsection{$\mathcal{F}(R)$ Theories}
For theories where the curvature coupling $\mathcal{F}$ is only a function of the Ricci scalar $R$, the $C_{\mu\nu}$ tensor takes the form,
\begin{dmath}
    C_{\mu\nu} = \mathcal{F}_1\, \Phi \,R_{\mu\nu} + g_{\mu\nu}\, \Box(\mathcal{F}_1\, \Phi) - \nabla_{\mu}\nabla_{\nu}(\mathcal{F}_1\, \Phi) -\frac{g_{\mu\nu}}{2} \,\mathcal{F}\, \Phi,
\end{dmath}
where
\begin{equation}
    \mathcal{F}_1 := \frac{\partial \mathcal{F}}{\partial R}.
\end{equation}
We now assume that $\mathcal{F}$ can be represented as a power series in $R$,
\begin{equation}
    \mathcal{F} = \sum_{n=0}^{N}{\alpha_n} R^n, \quad N>1.
\end{equation}
We also make another regularity assumption.
We assume that the scalar field and the metric coefficients are regular functions of the coupling constant $\epsilon$ at the EH. Note that this assumption \textit{does not} mean that we are working in a small coupling expansion. 
It means that the theory we are considering reduces to GR when the coupling constant goes to zero.
We now have two cases to analyse as before.

\vspace{5mm}
\textbf{Case 1: $\mathcal{F}(0) = 0$, $\partial_r \mathcal{F}(0) \neq 0$}

\vspace{2mm}
In this case the local expansion of Eq.\ (\ref{eq:Scalar-Field-Dr}) takes the form,
\begin{equation}
    \partial_r \Phi = -\frac{\epsilon\,\mathcal{F}'(0)}{2D_0(0)}r^{1-p} + O(r^{2-p}).
\end{equation}
Regularity requires that $p \in \{0,1\}$ since we assume that $\mathcal{F}'(0) \neq 0$. 
If $p = 0$ then surface gravity is non-zero and we are done.
Suppose $p = 1$. Then, we need to use the gravitational equations.
Let us look at the value of $E$ at the EH,
\begin{equation}
  E(0)  =-\frac{2 \left[16\, \pi\, \epsilon\, \mathcal{F}_1(0)\, \Phi (0)  -1\right] \left[D_0(0) K(0)-1\right]}{K(0)}.
\end{equation}
Because of the regularity assumption, 
\begin{equation}\label{eq:Local-Ricci-Flat-F-R}
    16 \pi \epsilon \mathcal{F}_1(0) \Phi (0)  \neq 1 \implies D_0(0) = \frac{1}{K(0)}.
\end{equation}
The local expansion of $E_{vr}$ is gives us,
\begin{equation}
   E_{vr}(0) =  -\frac{1}{K(0)}-8 \pi \,\epsilon\, \Phi (0)  \,\left[2 D_0(0) F_1(0)+F(0)\right] 
\end{equation}
We now use the fact that $F(0)=0$ and $D_0(0) = K(0)^{-1}$ in the above equation to get,
\begin{align}
    \frac{16 \pi\,\epsilon  \mathcal{F}_1(0)\, \Phi (0)  +1}{K(0)} = 0, \nonumber\\
    16 \pi \,\epsilon \mathcal{F}_1(0)\, \Phi (0)  +1 \neq 0, \nonumber\\
    \implies \frac{1}{K(0)} = 0 \nonumber,
\end{align}
which is a contradiction.

\vspace{5mm}
\textbf{Case 2: $\mathcal{F}^{(m)}(0) =0 , \mathcal{F}^{(m+1)}(0) \neq 0$}

\vspace{2mm}
In this case the local expansion of Eq.\ (\ref{eq:Scalar-Field-Dr}) takes the form,
\begin{equation}
    \partial_r \Phi = -\frac{\epsilon\,\mathcal{F}^{(m+1)}(0)}{2D_0(0)}r^{m+1-p} + O(r^{m+2-p}).
\end{equation}
Therefore, $p\in \{0,1,..,m+1\}$ if the field is regular. 
If $p=0$, then we see that surface gravity is nonzero from Eq.(\ref{eq:Surface-Gravity}). If $p=1$ we already showed a contradiction in Case 1 above. Suppose now that $p \geq 2$. The local expansion of $E_{vr}$ now takes the form,
\begin{equation}
   E_{vr} =  -\frac{1}{K(0)}-8 \pi \,\epsilon \,\mathcal{F}(0) \Phi (0)  = 0,
\end{equation}
using $\mathcal{F}(0) = 0$, we end up with a contradiction as before.
\par We have checked that the above derivation is valid when the curvature scalar takes the form,
\begin{equation}
     \mathcal{F} = \alpha_1 \,R + \alpha_2 \,R_{\mu\nu}^2 + \alpha_3\, R_{\alpha\beta\gamma\delta}^2.
\end{equation}
%%%%%%%%%%%%%%%%%%%%%%%%%%%%%%%%%%%%%%%%%%%%%%%%%%%%%%%%%%%%%%%%%%%%%%%%%%%%%%%%%%%%%%%%%%%%%%%%%%%%%%%%%%%%%%%%%%%%%%%%%%%%%%%%%%%%%%%%%%%%%%%%%%
\section{Coordinate transformations between PG coordinates and Schwarzschild and FLRW coordinates}\label{Appendix:Coordinate-Transformations}
In this Appendix we give the coordinate transformations between global PG coordinates introduced in Sec.~\ref{sec:OS-Spacetime} and FLRW coordinates inside the star and Schwarzschild spacetime outside the star.
\subsection{Interior of the Star and FLRW Coordinates}
We denote the FLRW coordinates by $x^{\mu} = \left(t,r_c,\theta,\phi\right)$. 
The metric inside the star in these coordinates takes the form,
\begin{equation}
    ds^2 = -dt^2 + a^2(t)(dr_c^2 + r_c^2 d\Omega^2),
\end{equation}
where the scaling factor $a(t)$ is given by,
\begin{equation}
    a(t) = \left(\frac{-t}{M}\right)^{2/3}.
\end{equation}
The coordinate transformation to PG coordinates takes the form,
\begin{align}\label{eq:PG-FLRW}
    t = t, \,
    r(t,r_c) = r_c a(t), \,
    \theta = \theta , \,
    \phi = \phi.
\end{align}
\subsection{Exterior of the Star and Schwarzschild Coordinates}
The spacetime outside the star is the Schwarzschild spacetime by Birkhoff's theorem.
The transformation from Schwarzschild coordinates $x^{\mu} = (t_s,r,\theta,\phi)$ to PG coordinates is given by,
\begin{align}
    &t(t_s,r) = t_s + 4M\left[\sqrt{\frac{r}{2M}} -\frac{1}{2}\log\left(\frac{\sqrt{r}+\sqrt{2M}}{\sqrt{r}-\sqrt{2M}}\right)\right] \nonumber , \\
    &r = r , \, \theta = \theta, \, \phi = \phi .
\end{align}
\bibliography{ref}
\end{document}